\newcommand{\np}{\textup{\textsf{NP}}\xspace}
\newcommand{\wth}{\textsf{W[2]}\textrm{-hard}\xspace}
\newcommand{\mcs}{\textup{\textsc{MCS}}\xspace}
\newcommand{\ptas}{\textup{\textsc{PTAS}}\xspace}
\newcommand{\acs}{\textup{\textsc{ACS}}\xspace}
\newcommand{\cs}{\textup{\textsc{CS}}\xspace}
\newcommand{\NN}{\mathrm{NN}\xspace}
\newcommand{\cov}{\mathrm{COV}\xspace}
\newcommand{\apx}{\textup{\textsf{APX}}\xspace}
\newcommand{\NP}{\textup{\textsf{NP}}\xspace}
\newcommand{\PP}{\textup{\textsf{P}}\xspace}
\newcommand{\ath}{\mathrm{PATH}\xspace}
\newcommand{\dist}{\mathrm{d}\xspace}
\newcommand{\neigh}{\mathrm{N}\xspace}
\newcommand{\xp}{{\sf XP}\xspace}
\newcommand{\fpt}{{\sf FPT}\xspace}
\newtheorem{theorem}{Theorem}[section]
\newtheorem{lemma}[theorem]{Lemma}
\newtheorem{definition}{Definition}[section]
\newtheorem{observation}{Observation}[theorem]
\title{Minimum Consistent Subset in Interval Graphs and Circle Graphs
}
\author{
  Bubai Manna\\
  Department of Mathematics, Indian Institute of Technology Kharagpur, India. \\
  \texttt{bubaimanna11@gmail.com}
}
\begin{document}
\maketitle
\begin{abstract}
\noindent In a connected simple graph $G = (V,E)$, each vertex of $V$ is colored by a color from the set of colors $C=\{c_1, c_2,\dots, c_{\alpha}\}$. We take a subset $S$ of $V$, such that for every vertex $v$ in $V\setminus S$, at least one vertex of the same color is present in its set of nearest neighbors in $S$. We refer to such a $S$ as a consistent subset (\cs). The Minimum Consistent Subset (\mcs) problem is the computation of a consistent subset of the minimum size. It is established that \mcs is NP-complete for general graphs, including planar graphs. We expand our study to interval graphs and circle graphs in an attempt to gain a complete understanding of the computational complexity of the \mcs problem across various graph classes.

\smallskip

\noindent This work introduces an $(4\alpha+ 2)$- approximation algorithm for \mcs in interval graphs where $\alpha$ is the number of colors in the interval graphs. Later, we show that in circle graphs, \mcs is \apx-hard.

\vspace{-0.1in}
\end{abstract}

\keywords{Minimum Consistent Subset \and MCS for Interval Graphs \and MCS for Circle Graphs \and Nearest-Neighbor Classification \and Consistent Subset \and Interval Graphs \and Circle Graphs \and Approximation Algorithm \and Interval Graphs \and APX-hardness}

\section{Introduction}\label{intro1}
\noindent A computer (or person) can utilize a method to classify the discrimination between two or more classes of objects in order to provide the system with a number of sample objects and the right classifications for each. The system may accept a large number of objects as input and identify the class of each sample object. Such systems have been developed for a wide range of essential applications, including speech recognition, handwriting recognition, and character recognition. As a result, the methods utilized to achieve such systems differ in terms of how examples are presented and how the likely classification of a sample is determined. However, there is a concern with the depiction of the objects. Assume the objects are points in a 2-dimensional metric.

\noindent Many supervised learning approaches use a colored training dataset $T$ in a metric space $(X, d)$ as input, with each element $t \in T$ assigned a color $c_i$ from the set of colors $C=\{c_1, c_2, \dots, c_{\alpha}\}$. The purpose is to preprocess $T$ in order to meet specific optimality requirements by immediately assigning color to any uncolored element in $X$. The closest neighbor rule, which assigns a color to each uncolored element $x$ based on the colors of its $k$ closest neighbors in the training dataset $T$ (where $k$ is a fixed integer), is a popular optimality criterion. The efficiency of such an algorithm is determined by the size of the training dataset. As a result, it is necessary to reduce the size of the training set while retaining the distance data. Hart \cite{Hart} introduced this concept in $1968$ with the minimum consistent subset (\mcs) problem. Given a colored training dataset $T$, the goal is to find a subset $S \subseteq T$ of the minimum cardinality such that the color of every point $t \in T$ is the same as one of its nearest neighbors in $S$. Over $2800$ citations to \cite{Hart} on Google Scholar demonstrate that the \mcs problem has found several uses since its inception. The article \cite{Wilfong} demonstrated that the \mcs problem for points in $\Re^2$ under the Euclidean norm is \np-complete if at least three colors color the input points. Furthermore, it is \np-complete even with two colors \cite{Bodhayan}. Recently, in the article \cite{Chitnis}, it was established that the \mcs problem is W[1]-hard when parameterized by the output size. Also, the particles \cite{Wilfong} and \cite{Biniaz} offer algorithms for the \mcs problem in $\Re^2$.

\subsection{Notaion, Definition and Preliminary Results}\label{intro} \noindent This paper explores the minimum consistent subset problem when $(\mathcal{X},d)$ is a graph metric. Most graph theory symbols and notations are standard and come from \cite{diestel2012graph}. For every graph $G$, we refer to the set of vertices as $V$ or $V(G)$, and the set of edges as $E$ or $E(G)$. Consider a graph $G$ with a vertex coloring function $C:V(G) \rightarrow \{c_1, c_2, \dots, c_{\alpha}\}$. Let $C(U)$ represent the set of colors of the vertices in $U$. Formally, $C(U)=\{C(u):u\in U\}$. For any two vertices $u,v \in V(G)$, we use $\dist(u,v)$ to signify the shortest path distance (number of edges) between $u$ and $v$ in $G$. $\dist(v,U)=\min_{u\in U}\dist(v,u)$ for a vertex $v\in V(G)$ and any subset of vertices $U\subseteq V$. For any graph $G$ and any vertex $v\in V(G)$, let $\neigh(v)=\{u\mbox{ }|\mbox{ }u\in V(G), (u,v)\in E(G)\}$ indicate the set of neighbors of $v$ and $\neigh[v]=\{v\}\cup \neigh(v)$. The distance between two subgraphs $G_1$ and $G_2$ in $G$ is represented as $\dist(G_1,G_2)=\min \{\dist(v_1,v_2)\mbox{ }|\mbox{ }v_1\in V(G_1),v_2\in V(G_2)\}$. For any subset of vertices $U\subseteq V$ in a graph $G$, $G[U]$ represents the subgraph of $G$ induced on $U$. The nearest neighbor of $v$ in the set $U$ is indicated as $\NN(v,U)$, which is formally defined as $\{u\in U \mbox{ }| \mbox{ }\dist(v,u)=\dist(v,U)\}$. Also, for $X, U\subseteq V$, $\NN(X,U)=\cup_{v\in X}\{u\in U \mbox{ }| \mbox{ }\dist(v,u)=\dist(v,U)\}$. The shortest path in $G$ is $\ath(v,u)$, which connects vertices $v$ and $u$. $\ath(v,U)=\{\ath(v,u) \mbox{ }|\mbox{ }u\in \NN(v,U)\}$. 
 
\smallskip

\noindent If there is a vertex $u\in \NN(v, U)$ such that $C(v)=C(u)$, then $v$ is said to be \emph{covered} by the set $U$. We sometimes use only $u$ covers $v$ where $u\in U$. $\cov(v,U)$ refers to the set of vertices in $U$ that cover $v$. $\cov(v,U)=\{u\in U \mbox{ }| \mbox{ }u\in \NN(v,U)\mbox{ and }C(u)=C(v)\}$, and for a subset $X\subseteq V$, $\cov(X,U)=\cup_{v\in X}\{u\in U \mbox{ }| \mbox{ }u\in \NN(v,U)\mbox{ and }C(u)=C(v)\}$. Thus, $u\in U$ covers $v$ indicates that $u\in \cov(v,U)$. If $\cov(v, U)=\emptyset$, then $v$ is not covered by the set $U$; otherwise, $v$ is covered by the set $U$. Similarly, if $C(X)=C(\cov(X,U))$, then $X$ is covered by $U$. If $v$ is not covered by the set $U$, $C(v) \neq C(\cov(v,U))$. Without loss of generality, we shall use $[n]$ to denote the set of integers $\{1, \ldots, n\}$. A subset $S$ is said to be a \emph{consistent subset} (in short \cs) for $(G,C)$ if every $v\in V(G)$, $C(v)\in C(\NN(v,S))$. The consistent subset problem in graphs is defined as follows:

\begin{tcolorbox}[enhanced,title={\color{black} \sc{\mcs on Graphs}}, colback=white, boxrule=0.4pt, attach boxed title to top center={xshift=-2cm, yshift*=-2mm}, boxed title style={size=small,frame hidden,colback=white}]
		
    \textbf{Input:} A graph $G$, a coloring function $C:V(G)\rightarrow \{c_1, c_2, \dots, c_{\alpha}\}$, and an integer $\ell$.\\
    \textbf{Question:} Does there exist a consistent subset of size $\le \ell$ for $(G,C)$?
\end{tcolorbox}
\noindent Figure \ref{fig1}(F) shows an example of both \cs and \mcs. Minimum consistent subset (\mcs) refers to a consistent subset with a minimum size. Banerjee et al.~\cite{Banerjee} demonstrated that the \mcs is \wth when parameterized by the size of the minimum consistent set, even with only two colors. Under basic complexity-theoretic assumptions for generic networks, a \fpt algorithm parameterized by $(c+\ell)$ is not feasible. This naturally raises the question of identifying nontrivial but simple graph classes for which the problem remains computationally intractable. Dey et al.~\cite{Sanjana} proposed a polynomial-time algorithm for \mcs on bi-colored paths, spiders, combs, etc. Also, Dey et al.~\cite{Anil} proposed a polynomial-time algorithm for bi-colored trees. Arimura et al.~\cite{Arunima} introduced a \xp algorithm parameterized by the number of colors $c$, with a running time of $\mathcal{O}(2^{4c}n^{2c+3})$. Recently, we also showed that \mcs on trees is \np-complete in \cite{Bubai}. 

\subsection{New Results}
Only the polynomial time algorithm of \mcs on simple graphs, such as paths, combs, and spiders, are discussed in \cite{Sanjana}, hence the problem is not $\np$-hard on the simple graphs presented in \cite{Sanjana}. However, \cite{Anil} described a polynomial time algorithm of \mcs on bi-chromatic trees. Recently, \cite{Arunima} demonstrated a fixed-parameter tractable of \mcs on trees using the number of colors as a parameter. The article \cite{bubai1} demonstrates that fixed-parameter tractability has a faster running time for the same problem. Aside from these results, no approximation algorithm has been created for $\np$-hard problems on \mcs till now. Therefore, an approximation algorithm for $\np$-hard problems is very much needed. We are looking for intersection graphs, specifically interval graphs and circle graphs. It is proved that the \mcs problem is $\np$-hard on interval graphs in the paper \cite{bubai1}. As a result, we are discovering the first approximation algorithm for a $\np$-hard problem. To discover an approximation, we determine whether a polynomial time approximation scheme (\ptas) is possible for this problem on interval graphs or not. To find this, we can show that \mcs on interval graph can be \apx-complete by using the similar reduction of $\np$-hard presented in \cite{bubai1}. Hence, no \ptas can be formed for the interval graphs unless $\PP=\NP$. Therefore, it is very important to find an approximation algorithm for \mcs in Interval graphs. In section \ref{section3}, we demonstrate the $(4\alpha+2)$-approximation for \mcs in interval graphs. In particular, if $\alpha=2$, this approach provides a $ 10$-approximation algorithm in bi-colored interval graphs. In section \ref{section4}, we prove that \mcs on circle graphs is $\apx$-hard, which means there is no \ptas for \mcs on circle graphs unless $\PP=\np$.

\smallskip
\noindent The definition of interval graphs is as follows.
\begin{definition}
\begin{mdframed}[backgroundcolor=red!10,topline=false,bottomline=false,leftline=false,rightline=false]\label{def3} 
A graph $I$ is said to be an \emph{interval graph} if there exists an interval layout of the graph $I$, or in other words, for each node, $v_i \in V(I)$ one can assign an interval $\alpha_i$ on the real line such that $(v_i,v_j) \in E(I)$ if and only if $\alpha_i$ and $\alpha_j$ (completely or partially) overlap in the layout of those intervals. Such $I$ is referred to as a \emph{interval graph}. Each interval of $I$ now receives a color from the collection $C=\{c_1,c_2,\dots,c_{\alpha}\}$. We sometimes refer to an interval as a vertex. If $I$ has $n$ intervals and $m$ intersects, then $\lvert V(I)\rvert =n$ and $\lvert E(I)\rvert =m$.
\end{mdframed}
\end{definition}
\begin{observation}\label{basicobservation}
The consistent subset must have at least one vertex from each color. If no vertices of color $c_i$ appear in a consistent subset, then vertices of color $c_i$ cannot be covered.
\end{observation}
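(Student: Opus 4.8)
The plan is to prove the statement by contradiction, reducing it directly to the definition of a consistent subset. Recall that $S$ is consistent for $(G,C)$ precisely when every vertex $v \in V(G)$ satisfies $C(v) \in C(\NN(v,S))$; equivalently, $\cov(v,S) \neq \emptyset$ for every $v$, since a vertex $u \in \NN(v,S)$ witnessing $C(v) \in C(\NN(v,S))$ is exactly an element of $\cov(v,S)$. The only nontrivial part of the hypothesis is the implicit understanding that each color in $C$ actually appears on some vertex of $G$ (otherwise there would be nothing to cover for an unused color), so I would state this convention explicitly at the outset.

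First I would fix an arbitrary color $c_i \in C$ and a consistent subset $S$, and suppose toward a contradiction that $S$ contains no vertex of color $c_i$. Since $c_i$ is used by the coloring, pick any vertex $v \in V(G)$ with $C(v) = c_i$. The key observation is that $\NN(v,S) \subseteq S$ by definition, so every nearest neighbor of $v$ inside $S$ is an element of $S$, and hence none of them carries the color $c_i$ by our assumption.

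It then follows immediately that $\cov(v,S) = \{u \in S \mid u \in \NN(v,S) \text{ and } C(u) = C(v)\} = \emptyset$, i.e.\ $v$ is not covered by $S$. This contradicts the consistency of $S$, which requires $\cov(v,S) \neq \emptyset$. Therefore any consistent subset must contain at least one vertex of color $c_i$, and since $c_i$ was arbitrary, it must contain at least one vertex of every color. The contrapositive reading recovers exactly the second sentence of the statement: if no vertex of color $c_i$ lies in $S$, then every vertex of color $c_i$ has $\cov(v,S) = \emptyset$ and so cannot be covered.

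I do not expect any real obstacle here, as the argument is a one-step unwinding of the definitions of $\NN(v,S)$ and $\cov(v,S)$. The only point requiring a small amount of care is the convention that $C$ denotes the set of colors genuinely occurring on $V(G)$, which guarantees the existence of the witness vertex $v$ of color $c_i$ used in the argument.
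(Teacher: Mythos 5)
Your proof is correct and is essentially the same argument the paper intends: the observation's second sentence is its own one-line justification, and you simply unwind the definitions of $\NN(v,S)$ and $\cov(v,S)$ to make it formal. The extra care about assuming every color in $C$ actually occurs on some vertex is a reasonable clarification but does not change the substance.
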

\begin{observation}\label{basi1}
If $S$ is a consistent subset, it is not necessary for a subset of $S$ to be a \mcs. In Figure \ref{fig1}(F), $\{v_3, v_4, v_8\}$ is a \cs, whereas $\{v_7, v_8\}$ is a \mcs. However, no subset of $\{v_3, v_4, v_8\}$ is \mcs.
\end{observation}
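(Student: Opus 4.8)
The statement is a negative existential claim, so the plan is to prove it by exhibiting a single witness and verifying it, rather than by any general argument; the colored interval graph of Figure~\ref{fig1}(F) already supplies the witness $S=\{v_3,v_4,v_8\}$, so the whole task reduces to three verifications.

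First, I would confirm that $S=\{v_3,v_4,v_8\}$ is a \cs. Concretely, for each vertex $v\in V(G)$ I would read off the shortest-path distance $\dist(v,S)$ from the layout, form the nearest-neighbour set $\NN(v,S)$, and check that $C(v)\in C(\NN(v,S))$, i.e.\ that $\cov(v,S)\neq\emptyset$. It suffices to do this for the vertices outside $S$, since every vertex of $S$ trivially covers itself.

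Second, I would verify that $\{v_7,v_8\}$ is in fact an \mcs. Consistency of $\{v_7,v_8\}$ is checked exactly as above. For minimality I would invoke Observation~\ref{basicobservation}: a \cs must contain at least one vertex of each colour used, and the instance uses (at least) two colours, so every \cs has size at least $2$; hence no singleton is consistent and $\{v_7,v_8\}$ has the minimum possible size $2$.

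The key and final step is to rule out every subset of $S$ as an \mcs. Since the \mcs-size equals $2$, the only candidates are the three two-element subsets $\{v_3,v_4\}$, $\{v_3,v_8\}$, $\{v_4,v_8\}$ --- the singletons are excluded by the size-$2$ lower bound above, and $S$ itself has size $3$, so it is consistent but not of minimum size. For each of these pairs $T$ I would produce one explicit witness of failure: a vertex $v$ for which every element of $\NN(v,T)$ has a colour different from $C(v)$, so that $\cov(v,T)=\emptyset$ and $T$ is not a \cs. I expect this bookkeeping to be the main obstacle, because deleting a vertex from $S$ can move the minimising distance and therefore change the membership of the nearest-neighbour set; one must recompute $\NN(v,T)$ for each reduced set and be careful with distance ties, since a vertex stays covered as long as \emph{some} colour-matching vertex attains the minimum distance. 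Assembling these three failures completes the counterexample and hence the observation.
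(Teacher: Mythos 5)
Your proposal is correct and matches the paper's (implicit) argument: the paper offers no separate proof for this observation, relying entirely on the witness from Figure~\ref{fig1}(F), and your plan --- verify consistency of $\{v_3,v_4,v_8\}$, use Observation~\ref{basicobservation} to get the size-$2$ lower bound so that $\{v_7,v_8\}$ is minimum, and then rule out the three two-element subsets of $S$ by exhibiting an uncovered vertex for each --- is exactly the verification that statement presupposes. The only thing left unexecuted is the concrete bookkeeping against the figure's colours and adjacencies, which is unavoidable in a blind attempt and does not affect the soundness of the approach.
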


\section{$(4\alpha +2)$-Approximation Algorithm of \mcs for Interval Graphs}\label{section3}

\noindent The following is the essential concept underlying our approximation approaches. A \emph{leaf bar cover} is a simple construction derived from any \mcs. A leaf bar cover made from a \cs cannot be more than twice the size of the corresponding \cs. We then show that using a dynamic programming technique, we can construct a minimum-size leaf bar cover in polynomial time. We then show that any leaf bar cover can be extended to a \cs with little extra cost. Therefore, the method for the $(4\alpha+2)$-approximation algorithm is as follows.

\smallskip

\noindent Let $I$ be an interval graph and $\lvert V(I)\rvert =n$. Also, let $1,2,\dots,2n$ be a left-to-right sequence of interval endpoints in $V(I)$ and it is denoted by $P(I)$. We employ two artificial points, $0$ and $2n+1$, located on the left and right sides of the points $1$ and $2n$, respectively. Additionally, let $C=\{c_1, c_2, \dots, c_{\alpha}\}$ be the collection of colors. For each interval $\ell \in V(I)$, the endpoint with the smaller (or bigger) label is called the \emph{left endpoint} (or \emph{right endpoint}), represented by $l(\ell)$ (or $r(\ell)$). We define $I[j]=\ell$ if $l(\ell)=j$ or $r(\ell)=j$; that is, $I[j]$ is an interval of $V(I)$ with one endpoint (left or right) equal to $j$. For a set of intervals $S\subseteq V(I)$, let $P(S)$ represent the set of all endpoints of the intervals of $S$. For any points $i,j\in P(I)$, let $(i..j)$ be a collection of points obtained by starting from $i$ and going left to right until reaching $j$. We assume that $(i..j)$ is an open set, which means it does not contain the endpoints $i$ and $j$. $(i..j)$ is known as a \emph{bar} (refer to Figure \ref{fig1}(A). We will use $[i..j]$ to represent a bar with endpoints $i$ and $j$. For each bar $s=(i..j)$, we call the endpoint with the smaller (or larger) label the left (or right) endpoint of $s$, denoted by $l(s)$ (or $r(s)$). For any bar $s$, let $I_s$ denote the set of intervals in $V(I)$ with at least one endpoint in $s$, and $O_s=V(I)\setminus I_s$ denote the set of intervals in $V(I)$ with both endpoints outside $s$.

\begin{mdframed}[backgroundcolor=red!10,topline=false,bottomline=false,leftline=false,rightline=false]\label{def5} 
\begin{definition}
Any bar $s$ is a leaf bar if and only if $O_s$ covers $I_s$.
\end{definition}
\end{mdframed}
\begin{observation}\label{easyobserve}
If $S$ is a consistent subset of $I$, then it is obvious that $V(I)\setminus S$ and $S$ are disjoint, and $V(I)\setminus S$ is covered by $S$ (see definitions in Section \ref{intro}).
\end{observation}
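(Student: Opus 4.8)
The plan is to verify both conjuncts directly from the definitions in Section \ref{intro}, treating the statement as essentially a translation exercise between the consistency condition and the covering notation. The first conjunct, that $V(I)\setminus S$ and $S$ are disjoint, is immediate from the definition of set complement: $V(I)\setminus S=\{v\in V(I): v\notin S\}$, so no vertex can lie in both $V(I)\setminus S$ and $S$, whence $(V(I)\setminus S)\cap S=\emptyset$. Nothing about consistency is needed here.

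For the second conjunct I would unfold the definition of a consistent subset. Since $S$ is a \cs for $(I,C)$, every vertex $v\in V(I)$ satisfies $C(v)\in C(\NN(v,S))$. Fixing any $v\in V(I)\setminus S$, the membership $C(v)\in C(\NN(v,S))$ means there is a vertex $u\in \NN(v,S)$ with $C(u)=C(v)$; by the definition of the covering set this $u$ lies in $\cov(v,S)$, so $\cov(v,S)\neq\emptyset$, i.e. $v$ is covered by $S$. This is the core step, and it holds for every $v\in V(I)\setminus S$ because the consistency requirement quantifies over all of $V(I)$.

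It then remains to lift this vertex-wise statement to the set-level notion, namely $C(V(I)\setminus S)=C(\cov(V(I)\setminus S,S))$. The inclusion $C(\cov(V(I)\setminus S,S))\subseteq C(V(I)\setminus S)$ holds because each $u\in\cov(V(I)\setminus S,S)$ is by definition a covering vertex of some $v\in V(I)\setminus S$, hence $C(u)=C(v)\in C(V(I)\setminus S)$. Conversely, for any color $c\in C(V(I)\setminus S)$ choose $v\in V(I)\setminus S$ with $C(v)=c$; by the previous paragraph $v$ is covered, so some $u\in\cov(v,S)\subseteq\cov(V(I)\setminus S,S)$ satisfies $C(u)=c$, giving $c\in C(\cov(V(I)\setminus S,S))$. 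The two inclusions yield the equality, which is exactly the definition of $V(I)\setminus S$ being covered by $S$.

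I do not expect a genuine obstacle: every step is a direct rewriting of the definitions, which is why the authors label the claim \emph{obvious}. The only point requiring a little care is that ``a set $X$ is covered by $U$'' is phrased through the color-set equality $C(X)=C(\cov(X,U))$ rather than through $\cov(v,U)\neq\emptyset$ for each $v\in X$, so the final paragraph is where I would make explicit that, in this setting, per-vertex coverage of all $v\in V(I)\setminus S$ is equivalent to that color-set equality.
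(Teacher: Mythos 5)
Your proof is correct and matches the paper's intent exactly: the paper states this observation without proof, treating it as an immediate unfolding of the definitions, which is precisely what you carry out (disjointness is pure set theory, and the covering claim follows by applying the consistency condition to each $v\in V(I)\setminus S$ and then passing to the color-set equality). The only nitpick is your closing remark that per-vertex coverage is \emph{equivalent} to the color-set equality $C(X)=C(\cov(X,U))$ --- in general the equality could hold while some individual vertex is uncovered --- but your argument only uses the direction you actually prove, so nothing is affected.
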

\begin{observation}\label{observation3}
For any $i\in \{0,1,\dots,2n\}$, $s=(i...i+1)$ is a leaf bar since $I_s=\emptyset$, $O_s=V(I)$, and $V(I)$ itself is a consistent subset.
\end{observation}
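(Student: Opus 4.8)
The plan is to unwind the three constituent claims packaged into the observation—namely $I_s=\emptyset$, $O_s=V(I)$, and that $V(I)$ is a consistent subset—and then match them against the definition of a leaf bar, which requires only that $O_s$ cover $I_s$.

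First I would establish $I_s=\emptyset$. The bar $s=(i..i+1)$ is the \emph{open} collection of endpoints strictly between the consecutive labels $i$ and $i+1$ of $P(I)$. Since the labels $0,1,\dots,2n+1$ are consecutive integers, no label of $P(I)$ lies strictly between $i$ and $i+1$, so $s$ contains no endpoint whatsoever. By definition $I_s$ is the set of intervals of $V(I)$ having at least one endpoint inside $s$; as $s$ is devoid of endpoints, no interval qualifies, whence $I_s=\emptyset$. This in turn yields the second claim immediately, since $O_s=V(I)\setminus I_s=V(I)\setminus\emptyset=V(I)$.

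Next I would verify directly from the definition that $V(I)$ is a consistent subset: for every $v\in V(I)$ we have $v\in V(I)$ itself with $\dist(v,v)=0$, so $v\in\NN(v,V(I))$ and hence $C(v)\in C(\NN(v,V(I)))$. As this is exactly the defining condition of a \cs, holding for every vertex, $V(I)$ is trivially consistent. Finally, the leaf-bar condition demands that $O_s$ cover $I_s$; because $I_s=\emptyset$, this holds vacuously, the requirement $C(I_s)=C(\cov(I_s,O_s))$ reducing to $\emptyset=\emptyset$. Equivalently, since $O_s=V(I)$ is a consistent subset it covers all of $V(I)$, and in particular the empty family $I_s$. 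Either route concludes that $s$ is a leaf bar.

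There is essentially no hard step here; the statement is foundational rather than deep. The only points meriting care are the sanity check that the empty family $I_s$ is covered vacuously—i.e. that the covering convention $C(X)=C(\cov(X,U))$ is indeed satisfied when $X=\emptyset$—and the remark that it is precisely the consecutive-integer labeling of $P(I)$ that guarantees $s=(i..i+1)$ encloses no endpoint, which is what forces $I_s=\emptyset$ to begin with.
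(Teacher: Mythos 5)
Your proposal is correct and follows exactly the paper's own inline justification: $I_s=\emptyset$ because the open bar $(i..i+1)$ contains no endpoint label, hence $O_s=V(I)$, which is trivially a consistent subset, so the leaf-bar condition holds vacuously. You simply spell out the details the paper leaves implicit; there is no difference in approach.
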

\begin{observation}\label{lemma1b}
For any leaf bar $s$, $O_s$ is a consistent subset of $I$ since $I=I_s\cup O_s$, $I_s \cap O_s=\emptyset$, and $I_s$ is covered by $O_s$.
\end{observation}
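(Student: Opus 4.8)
The plan is to verify directly that $O_s$ satisfies the defining condition of a consistent subset, namely that $C(v)\in C(\NN(v,O_s))$ holds for every $v\in V(I)$. The three facts asserted in the statement — that $I=I_s\cup O_s$, that $I_s\cap O_s=\emptyset$, and that $I_s$ is covered by $O_s$ — are either immediate from the definitions of $I_s$ and $O_s$ or supplied by the leaf-bar hypothesis, so the work reduces to threading these facts through the definitions of $\NN$ and $\cov$ from Section~\ref{intro}. First I would observe that $\{I_s,O_s\}$ partitions $V(I)$, so it suffices to establish the condition separately for the two cases $v\in O_s$ and $v\in I_s$.

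For the first case, take any $v\in O_s$. Since $v$ itself lies in $O_s$, we have $\dist(v,O_s)=0$, and because distinct vertices of a simple graph are at distance at least $1$, the nearest-neighbor set collapses to $\NN(v,O_s)=\{v\}$. Hence $C(v)\in C(\NN(v,O_s))$ holds trivially: every selected vertex covers itself, and the leaf-bar hypothesis is not even needed here.

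For the second case, take any $v\in I_s$. The leaf-bar hypothesis states precisely that $O_s$ covers $I_s$, so $v$ is covered by $O_s$, i.e.\ $\cov(v,O_s)\neq\emptyset$. Unwinding the definition of $\cov$, there is a vertex $u\in O_s$ with $u\in\NN(v,O_s)$ and $C(u)=C(v)$, whence $C(v)=C(u)\in C(\NN(v,O_s))$. Combining the two cases, the consistent-subset condition holds for every $v\in V(I)$, which is exactly the claim that $O_s$ is a consistent subset of $I$.

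Because every step is a substitution into the definitions, there is no genuine obstacle. The only point that deserves care is the bookkeeping in the first case — recognizing that a vertex lying in $O_s$ is at distance $0$ from the set and is therefore automatically its own nearest neighbor of the correct color — so that the leaf-bar property is invoked only for the discarded intervals $I_s$ and never for the retained ones.
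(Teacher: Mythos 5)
Your proposal is correct and follows essentially the same route as the paper, which justifies the observation in one line by the partition $I=I_s\cup O_s$, the disjointness $I_s\cap O_s=\emptyset$, and the leaf-bar property that $O_s$ covers $I_s$; you merely make explicit the (paper-implicit, cf.\ Observation 2.2) fact that a vertex of $O_s$ is its own nearest neighbor in $O_s$ and hence trivially covered. The only caveat is that you read ``$O_s$ covers $I_s$'' as every vertex of $I_s$ being individually covered, which is slightly stronger than the paper's literal set-level definition $C(I_s)=C(\cov(I_s,O_s))$, but this is the interpretation the paper itself uses throughout (e.g.\ in the proof of Lemma 2.2), so no substantive difference arises.
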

\noindent First, we find a subset $Z_s\subseteq O_s$ for each leaf bar $s$ such that $\lvert Z_s\rvert \leq 2\alpha$. $Z_s$ is called the \emph{useful cover} of the leaf bar $s$.
\subsection{Identifying Leaf Bars and the Useful Cover of Leaf Bars}
For points $0$ to $2n+1$, we create a $(2n+2)\times (2n+2)$ Boolean matrix named $M$ (see Table \ref{fig1}(B) corresponding to Figure \ref{fig1}(A)). For each pair $i$, $j$, $i<j$, we check whether $(i..j)$ is a leaf bar or not. If it's a leaf bar, then $M[i,j]=1$; otherwise, $M[i,j]=0$. It's basically a procedure for demonstrating how we calculate each leaf bar. For each $s=(i..j)$, it takes $\mathcal{O}(n^2)$ to verify whether $s$ is a leaf bar or not because we discover $\NN(I_s, O_s)$ in $\mathcal{O}(n^2)$ time and check whether $C(\NN(I_s, O_s))$ is equal to $C(I_s)$ or not in constant time. We use an Algorithm \ref{alg:b} to locate $Z_s$ when $s$ is a leaf bar, as shown below.

\SetKwComment{Comment}{/* }{ */}
\begin{algorithm}
\caption{An algorithm to find an efficient number $O(0..2n+1)$}\label{alg:b}
\KwData{An interval graph $I$ and corresponding color set $C=\{c_1, c_2, \dots, c_{\alpha}\}$. $s=(i..j)$ is a leaf bar, and $I_s \neq \emptyset.$}
\KwResult{ Useful cover $Z_s$ of $s$.}

     $Z_s\gets \emptyset$;
     
      $Q\gets \NN(I_s, O_s)$;
      Partition $Q=Q_L\cup Q_R\cup Q_O$;

      \If{$Q_O= \emptyset$}{
        \For{ each color $c_k\in C(I_s)$, $1\leq k\leq \alpha$ }{
        Find greatest left endpoint $t_1$ such that $I[t_1]\in Q_L$ and $C(I[t_1])=c_k$ (if exists);
        
        Find lowest right endpoint $t_2$ such that $I[t_2]\in Q_R$ and $C(I[t_2])=c_k$ (if exists);
        
        $Z_s \gets Z_s\cup \{I[t_1]\}\cup \{I[t_2]\}$;
        
        }
      
      }
      
      \If{$Q_O\neq \emptyset$}{
        \For{each color $c_k\in C(I_s)$, $1\leq k\leq \alpha$}{
        Find the largest left endpoint $t$ such that $I[t]\in Q_O$ and $C(I[t])=c_k$ if exists, and if it does not exist, use the above method for $Q_O= \emptyset$;
        
        $Z_s \gets Z_s\cup \{I[t]\}$ or $Z_s \gets Z_s\cup \{I[t_1]\}\cup \{I[t_2]\}$ depending on the two situations;
        
        }
      
      }
      
      Return $Z_s$;

\end{algorithm}

\noindent \begin{tcolorbox}[breakable,bicolor,
  colback=cyan!5,colframe=cyan!5,title=Interval Graph Construction.,boxrule=0pt,frame hidden]

\underline{\textbf{Finding $Z_s$ of Leaf Bar $s$:}}
 \vspace{5pt}
Given that $s=(i,j)$ is a leaf bar. Initially, $Z_s = \emptyset$. We assume $\lvert C(I_s)\rvert =k^{'}$ (see deinitions in Section \ref{intro}). We divide $Q$ into three distinct sets, $Q_L$, $Q_R$, and $Q_O$, as follows. $Q_L$ contains intervals where both ends are less than or equal to $i$. Similarly, intervals with both ends greater than or equal to $j$ are included in $Q_R$. In addition, $Q_O$ contains intervals with left endpoints less than or equal to $i$ and right endpoints greater than or equal to $j$. The cardinalities $\lvert Q_L\rvert $, $\lvert Q_R\rvert $, and $\lvert Q_O\rvert$ can range from $0$ to $\lvert Q\rvert$, depending on the condition of $s$. 

\smallskip

\noindent If $Q_O=\emptyset$ (see Figure \ref{fig1}(E)), then find the greatest left endpoint $t_1$ such that $I[t_1]\in Q_L$ and $C(I[t_1])=c_k$ (if exists) for each $k$, $1\leq k\leq \alpha$; that is, there exist at most $k^{'}$ such intervals. Find the lowest right endpoint $t_2$ such that $I[t_2]\in Q_R$ and $C(I[t_2])=c_k$ (if exists) for each $k$, $1\leq k\leq \alpha$; that is, there exist $k^{'}$ such intervals. We define $Z_s \gets Z_s\cup \{I[t_1]\}\cup \{I[t_2]\}$. Thus, $\lvert Z_s\rvert \leq 2k^{'} \leq 2\alpha$, and we are done.

\smallskip

\noindent If $Q_O\neq \emptyset$ , we have two instances. 1) If $C(Q_O)\geq k^{'}$ (see Figure \ref{fig1}(C)), then for each color $c_k$ in $C(I_s)$, we choose a greatest left endpoint $t$ such that $I[t]\in Q_O$ and $C(I[t])=c_k$. We define $Z_s\gets Z_s\cup \{I[t]\}$. Therefore, $\lvert Z_s\rvert \leq \alpha$, and we are done. 2) If $C(Q_O) < k^{'}$ (see Figure \ref{fig1}(D)), then for each color $c_k$ in $C(Q_O)$, we choose the greatest left endpoint $t$ such that $I[t]\in Q_O$ and $C(I[t])=c_k$. We define $Z_s\gets Z_s\cup \{I[i]\}$. For the remaining colors of $C(I_s)$, we use the procedure outlined above when $Q_O= \emptyset$. Thus, $\lvert Z_s\rvert = 2k^{'} \leq 2\alpha$, and we're done.
\end{tcolorbox}

\begin{lemma}\label{lemmaimportant}
If $s=(i..j)$ is a leaf bar, then the corresponding $Z_s$ of size at most $2\alpha$ produced by Algorithm \ref{alg:b} covers $I_s$. 
\end{lemma}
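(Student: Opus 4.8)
The plan is to reduce the claim to a purely metric statement and then verify it along the three parts $Q_L,Q_R,Q_O$ of $Q=\NN(I_s,O_s)$. First I would record the key reduction: since $Z_s\subseteq O_s$, every $v\in I_s$ satisfies $\dist(v,Z_s)\ge \dist(v,O_s)$. Hence it suffices to prove that for each $v\in I_s$ there is some $z\in Z_s$ with $C(z)=C(v)$ and $\dist(v,z)=\dist(v,O_s)$; any such $z$ automatically attains the minimum $\dist(v,Z_s)$, so $z\in\NN(v,Z_s)$ and $v$ is covered by $Z_s$. Because $s$ is a leaf bar, Observation \ref{lemma1b} guarantees a same-colour witness $u^\ast\in\NN(v,O_s)\subseteq Q$ in $O_s$; the whole difficulty is to replace $u^\ast$ by one of the few representatives retained in $Z_s$ without increasing the distance.

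Second, I would dispose of the spanning part $Q_O$, which is the clean case. Every interval in $Q_O$ has left endpoint $\le i$ and right endpoint $\ge j$, hence contains the whole open bar $(i..j)$; since each $v\in I_s$ has an endpoint inside $(i..j)$, every $Q_O$ interval overlaps, and is therefore adjacent to, every interval of $I_s$. Consequently, as soon as $Q_O\neq\emptyset$ we get $\dist(v,O_s)=1$ for all $v\in I_s$, and every $Q_O$ interval lies in $\NN(v,O_s)$. So if $C(I_s)\subseteq C(Q_O)$, keeping one $Q_O$ interval per colour of $C(I_s)$ already furnishes, for each $v$, a same-colour neighbour at distance $1=\dist(v,O_s)$; this settles the subcase $C(Q_O)\ge k'$ with $|Z_s|\le\alpha$. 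For the remaining colours in the subcase $C(Q_O)<k'$, the witness $u^\ast$ of a $v$ of such a colour lies in $Q_L\cup Q_R$ and is still at distance $1$, so this subcase folds into the one-sided analysis below with target distance $d=1$.

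Third, and this is the technical heart, I would handle the one-sided representatives used when $Q_O=\emptyset$ (and for the leftover colours above). Fix a colour $c$ and a vertex $v\in I_s$ whose same-colour witness $u^\ast$ lies, say, in $Q_L$ at distance $d=\dist(v,O_s)$; I must show that the $c$-coloured interval $I[t_1]\in Q_L$ with the \emph{greatest} left endpoint, the one chosen by Algorithm \ref{alg:b}, also satisfies $\dist(v,I[t_1])=d$, the symmetric statement with the lowest right endpoint handling $Q_R$. The lever is that all of $Q_L$ lies entirely left of $i$ while every interval of $I_s$ meets the bar, so a shortest $v$-to-$Q_L$ walk is monotone leftward and its last hop is an interval overlapping $u^\ast$ near the left boundary of the bar; I would argue this last hop can be rerouted onto $I[t_1]$, using that $I[t_1]$, as the innermost same-colour nearest neighbour in $Q_L$, reaches at least as far right as any competing $c$-interval that is itself a nearest neighbour. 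The main obstacle is precisely this monotonicity: a larger left endpoint need not force a larger right endpoint (nested intervals), so one must show that such nested configurations either cannot both persist as nearest neighbours of $I_s$ in $Q_L$ or produce only distance ties, and in the worst case appeal to the colour-wise reading of coverage given by $C(I_s)=C(\cov(I_s,Z_s))$ in Section \ref{intro}. Once this one-sided monotonicity is secured, combining the three parts shows every relevant $v\in I_s$ is covered, and since the construction adds at most two intervals per colour we obtain $|Z_s|\le 2\alpha$, completing the proof.
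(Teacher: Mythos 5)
Your opening reduction and your treatment of $Q_O$ are both sound (and essentially match what the paper does for those parts), but the argument has a genuine gap exactly where you flag it: the one-sided case is never actually proved, and the monotonicity you would need is \emph{false} for the selection rule that Algorithm \ref{alg:b} actually uses. The algorithm keeps, for each colour $c$, the $c$-coloured interval of $Q_L$ with the greatest \emph{left} endpoint, whereas your rerouting argument needs the one that reaches farthest to the \emph{right}; for nested intervals these differ, and then the claim fails. Concretely, take five intervals $b=[1,7]$ (red), $a=[3,4]$ (red), $u=[2,9]$ (blue), $w=[6,8]$ (blue), $v=[5,10]$ (red), and the bar $s=(8..11)$. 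Then $I_s=\{u,v\}$, $O_s=Q_L=\{a,b,w\}$ with $Q_R=Q_O=\emptyset$, and $s$ is a leaf bar ($v$ is covered by $b$, $u$ by $w$). The algorithm returns $Z_s=\{a,w\}$, since $a$ has the larger left endpoint among the red intervals of $Q_L$; but $\dist(v,w)=1$ while $\dist(v,a)=2$, so $\NN(v,Z_s)=\{w\}$ is entirely blue and the red interval $v$ is not covered by $Z_s$. The ``distance ties'' you hope for in the nested configuration do not materialise, and no appeal to the colour-wise reading of coverage can repair this, because coverage of $v$ is decided by the nearest neighbours of $v$ \emph{inside} $Z_s$, and those are all of the wrong colour.

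Two further remarks. First, your initial reduction (find $z\in Z_s$ of $v$'s colour with $\dist(v,z)=\dist(v,O_s)$) is sufficient but strictly stronger than necessary: coverage only requires a same-coloured vertex among $\NN(v,Z_s)$, whose common distance may exceed $\dist(v,O_s)$. That slack is why single-colour instances survive, but it does not rescue the multi-colour configuration above. Second, the paper's own proof is no more complete than yours: it asserts in one line that the retained representative of colour $C(I[i])$ gives ``a neighbor interval of the same color in $Q_L\cup Q_R$'' without justifying why the greatest-left-endpoint choice is adjacent to (or at nearest distance from) $I[i]$ --- precisely the step that fails. So you have correctly isolated the missing lemma, a one-sided exchange/monotonicity statement for interval graphs, but neither you nor the paper proves it, and for the algorithm as written it is not true. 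A repair would have to select by greatest right endpoint in $Q_L$ (and, symmetrically, smallest left endpoint in $Q_R$), and even then the exchange argument for nearest-neighbour distances larger than $1$ would still need to be written out in full.
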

\begin{proof}
\noindent Given that $s$ is a leaf cover, then $I_s$ is covered by $O_s$. We will show that $I_s$ is covered by $Z_s$. Assume $I[i] \in I_s$. We've got two cases. Assume that $\lvert Q_O\rvert= 0$ (see Figure \ref{fig1}(E)). Since $s$ is a leaf bar, $I[i]$ must have a neighbor interval of $Q_L$ or $Q_R$. According to Algorithm \ref{alg:b}, we took the greatest left endpoint $t_1$ such that $I[t_1]\in Q_L$ and $C(I[t_1])=c_k$ for each such $c_k\in C(I_s)$ (if exists) in $Z_s$, as well as the lowest right endpoint $t_2$ such that $I[t_2]\in Q_R$ and $C(I[t_2])=c_k$ for each such $c_k\in C(I_s)$ (if exists) in $Z_s$. As a result, if $I[i]=c_j$, then $c_j\in C(I_s)$, which means that $I[i]$ has a neighbor interval of the same color in $Q_L\cup Q_R$. Hence, $I[i]$ is covered.

\smallskip

\noindent Assume that $\lvert Q_O\rvert \neq 0$. Assume $I[k]\in I_s$ is an interval with color $c_1$. If $c_1\in C(Q_O)$ (see Figure \ref{fig1}(C)), $I[k]$ is covered by an interval of $Q_O$, since according to Algorithm \ref{alg:b}, we picked the greatest left endpoint $t$ such that $I[t]\in Q_O$ and $C(I[t])=c_k$ (if exists) in $Z_s$. Here, $I[t]\in Q_O$ exists, and $C(I[t])=c_1$. If there is no interval $I[t]\in Q_O$ such that $C(I[t])=c_1$ (see Figure \ref{fig1}(D)), we repeat the prior situation when $\lvert Q_O\rvert= 0$ and $I[k]$ must be covered by an interval in $Q_L\cup Q_R$ that also appears in $Z_s$. Hence, our lemma is proven.
\end{proof}

\noindent The Algorithm \ref{alg:b} asserts that the size of $Z_s$ is at most $2\alpha$, and this applies to interval graphs. We must use this Algorithm \ref{alg:b} to prove our main result. We define \emph{leaf bar cover} as follows.
\begin{mdframed}[backgroundcolor=red!10,topline=false,bottomline=false,leftline=false,rightline=false] 
\begin{definition}\label{def}
A set of bars $D=\{s_1,\dots,s_m\}$ is a \emph{leaf bar cover} if the three conditions listed below are met.
\begin{enumerate}
    \item  $\cup_{i=1}^{m}I_{s_i}$ is covered by $\cup_{i=1}^{m}Z_{s_i}$ (see definitions in SEction \ref{intro} and observation \ref{easyobserve}).
    \item $V(I)=(\cup_{i=1}^{m}I_{s_i})\cup (\cup_{i=1}^{m}Z_{s_i})$.
    \item $(\cup_{i=1}^{m}I_{s_i})\cap (\cup_{i=1}^{m}Z_{s_i})=\emptyset$.
\end{enumerate}
\end{definition}
\end{mdframed}
\noindent So, a leaf bar cover "covers" the entire interval graph. We will eventually use the bars to create a minimum-sized leaf bar cover. The following lemma explains why we call a leaf bar cover.

\begin{figure}
    \centering
    \includegraphics[width=0.85\textwidth]{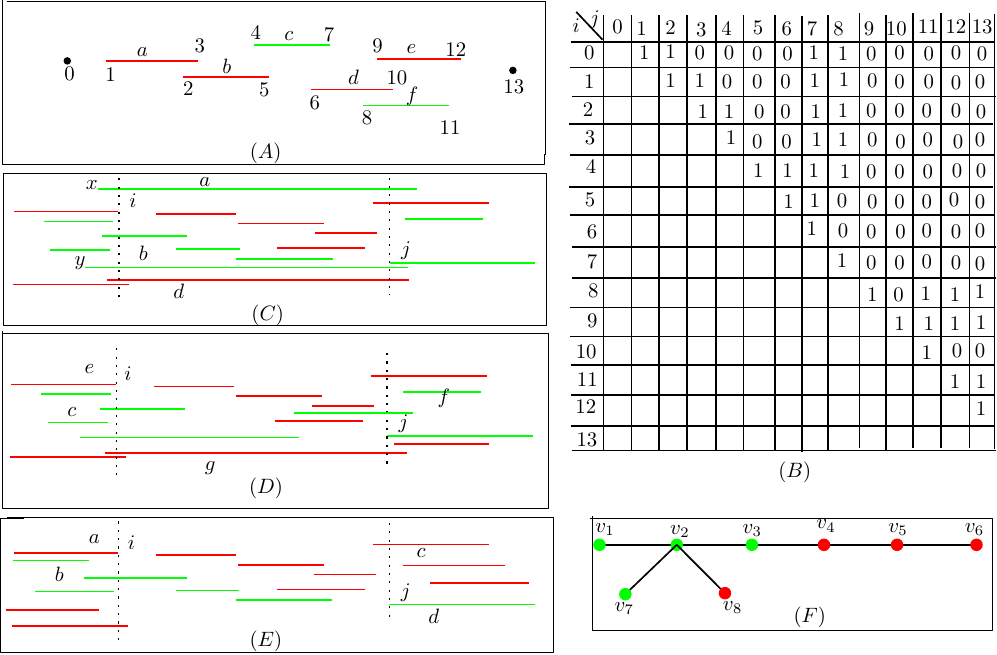}
    \caption{(A): The bar $s=(1..7)$ is a leaf bar. $l(s)=2$ and $r(s)=6$. $I[1]=I[3]=a$. (B): Boolean table M. $M[0,8]=1$, $M[0,9]=0$ because $(0..8)$ is a leaf bar but $(0..9)$ is not. (C): The 'green' color intervals in $Q_O$ are $a$ and $b$, and the greatest left endpoint is $x$. $I[x]=a$. The 'red' color interval in $Q_O$ is $d$. So, $Z_s=\{a,d\}$. (E): For 'red' color, $a$ and $b$ are the greatest left-endpoint interval and lowest right-endpoint interval in $Q_L$ and $Q_R$, respectively. Similarly, $c$ and $f$ are for 'green'. $Z_s=\{a,b,c,d\}$. (D): combining (C) and (E), we get $Z_s=\{c,f,g\}$.  (F): $\{v_3, v_4, v_8\}$ is a \cs and $\{v_7, v_8\}$ is an \mcs. $\{v_1, v_8\}$ is also an \mcs.} \label{fig1}
\end{figure}

\begin{lemma}\label{lemma2}
Let $S\subseteq I$ represent a consistent subset of $I$ where $\lvert S\rvert =d$ and no proper subset of $S$ is a consistent subset. Let $i_1 < i_2 < \dots < i_{2d}$ denote the points in $P(S)$. Let $s_j=(i_{j}..i_{j+1})$ for any $j$ with $1\leq j \leq 2d-1$. If $i_1>1$, take $s_0=(0..i_1)$; if $i_{2d}<2n$, take $s_{2d}=(2d..2n+1)$. $\{s_0,s_1, \dots,s_{2d}\}$ represents a leaf bar cover.
\end{lemma}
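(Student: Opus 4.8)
The plan is to verify the three defining conditions of a leaf bar cover (Definition \ref{def}) for the family $\{s_0,\dots,s_{2d}\}$, after first checking that every $s_j$ is in fact a leaf bar so that the useful covers $Z_{s_j}$ are defined. I would begin with the purely combinatorial skeleton. Since $i_1<\dots<i_{2d}$ are exactly the points of $P(S)$ and each point of $\{1,\dots,2n\}$ is the endpoint of a unique interval, an interval of $I$ has an endpoint in $P(S)$ if and only if it belongs to $S$. The open bars $s_0,\dots,s_{2d}$ collect all endpoints lying outside $P(S)$, so an interval lies in none of the $I_{s_j}$ precisely when both of its endpoints are in $P(S)$, i.e.\ exactly when it lies in $S$. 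This gives the key identity $\bigcup_{j} I_{s_j}=V(I)\setminus S$. Consequently conditions (2) and (3) of Definition \ref{def} are together equivalent to the single equality $\bigcup_j Z_{s_j}=S$, and condition (1) then comes for free, because $S$ is a $\cs$ and hence covers $V(I)\setminus S=\bigcup_j I_{s_j}$ by Observation \ref{easyobserve}. So the whole proof reduces to two claims: every $s_j$ is a leaf bar, and $\bigcup_j Z_{s_j}=S$.

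For the leaf-bar claim, note that no endpoint of an $S$-interval lies in the open bar $s_j$, so $S\subseteq O_{s_j}$ for every $j$; since $S$ covers $I_{s_j}\subseteq V(I)\setminus S$, it remains to upgrade ``covered by $S$'' to ``covered by $O_{s_j}$''. The crux is the distance-preservation statement $\dist(v,O_{s_j})=\dist(v,S)$ for each $v\in I_{s_j}$: the inequality $\le$ always holds as $S\subseteq O_{s_j}$, and once equality holds the same-colour witness $u\in S$ for $v$ lies at the minimum distance inside $O_{s_j}$ as well, so $u\in\NN(v,O_{s_j})$ and $v$ is covered by $O_{s_j}$. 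To prove the equality I would analyse the two boundary intervals $A=I[i_j]$ and $B=I[i_{j+1}]$ (both in $S$) together with the position of $v$ relative to the bar, showing that whenever some interval of $O_{s_j}\setminus S$ reaches $v$ strictly faster than $S$ does, one of $A,B$ must already span the bar and overlap $v$, forcing $\dist(v,S)=1$ and contradicting the assumed shortcut. This is exactly where minimality of $S$ must enter: a \emph{non-minimal} consistent subset can place an interval strictly inside a middle bar whose only near neighbour is a differently coloured spanning interval of $O_{s_j}\setminus S$, and then $s_j$ is not a leaf bar. I expect this distance-preservation argument, and pinning down precisely how ``no proper subset of $S$ is a $\cs$'' rules out the bad spanning configuration, to be the main obstacle of the whole lemma.

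Granting the leaf-bar claim, Lemma \ref{lemmaimportant} supplies $|Z_{s_j}|\le 2\alpha$ and that $Z_{s_j}$ covers $I_{s_j}$, and I would establish $\bigcup_j Z_{s_j}=S$ in two directions. For $\bigcup_j Z_{s_j}\subseteq S$, I would use distance preservation to argue that, for each colour present in $I_{s_j}$, the nearest same-coloured interval of $O_{s_j}$ can be taken in $S$, so the per-colour extreme intervals selected by Algorithm \ref{alg:b} lie in $S$. For $S\subseteq\bigcup_j Z_{s_j}$ I would invoke minimality directly: for each $u\in S$ the set $S\setminus\{u\}$ fails to be a $\cs$, so deleting $u$ leaves some vertex $x$ without a same-coloured nearest neighbour, which forces $u\in\NN(x,S)$ with $C(u)=C(x)$; since every $S$-interval lies in $O_{s_k}$ for all $k$, $u$ is then a nearest neighbour within $O_{s_k}$ of an interior interval $x\in I_{s_k}$ and is picked as the colour-$C(u)$ representative into $Z_{s_k}$. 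Combining the inclusions gives $\bigcup_j Z_{s_j}=S$ and closes conditions (1)--(3). A few degeneracies should be recorded so the equality holds without gaps: bars with $I_{s_j}=\emptyset$ contribute $Z_{s_j}=\emptyset$ and are harmless (Observation \ref{observation3}); the boundary bars $s_0,s_{2d}$ appear only when $i_1>1$ or $i_{2d}<2n$ and are treated identically; and the subcase where deleting $u$ uncovers only $u$ itself (for instance when $u$ is the unique interval of its colour) needs the separate remark that $u$ still has an interior neighbour selecting it into some $Z_{s_k}$.
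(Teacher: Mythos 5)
You have correctly reduced the lemma to two claims --- every $s_j$ is a leaf bar, and $\bigcup_j Z_{s_j}=S$ --- and you have correctly located the crux in the first one. But the resolution you hope for does not exist: minimality of $S$ does \emph{not} rule out the ``bad spanning configuration,'' and the leaf-bar claim is in fact false as you (and the paper) would need it. Take four intervals $A=[1,3]$ (red), $w=[2,7]$ (blue), $v=[4,5]$ (red), $B=[6,8]$ (blue); the intersection graph is a star centred at $w$ with leaves $A,v,B$. Then $S=\{A,B\}$ is a consistent subset ($\NN(w,S)=\{A,B\}$ contains blue, $\NN(v,S)=\{A,B\}$ contains red), and it is minimal since neither singleton works. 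Here $P(S)=\{1,3,6,8\}$ and the middle bar $s_2=(3..6)$ has $I_{s_2}=\{v\}$, $O_{s_2}=\{A,w,B\}$, and $\NN(v,O_{s_2})=\{w\}$ because $\dist(v,w)=1<2=\dist(v,A)=\dist(v,B)$; since $C(w)\neq C(v)$, the bar $s_2$ is not a leaf bar. So the distance-preservation identity $\dist(v,O_{s_j})=\dist(v,S)$ that your argument hinges on fails for a minimal (indeed minimum) consistent subset, and no amount of exploiting ``no proper subset of $S$ is a \cs'' can repair it. Your second inclusion $S\subseteq\bigcup_j Z_{s_j}$ has a further soft spot even granting leaf bars: Algorithm \ref{alg:b} picks per-colour \emph{extremal} representatives from $\NN(I_s,O_s)$, so the witness $u$ produced by deleting $u$ from $S$ need not be the interval the algorithm actually selects.

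For comparison, the paper's own proof does not confront any of this: it asserts in one line that $S$ ``can be used to cover each leaf bar $s_j$'' and then silently redefines $Z_{s_j}$ to be an arbitrary subset of $S$ covering $I_{s_j}$, rather than the output of Algorithm \ref{alg:b} on a verified leaf bar; its claim that $\bigcup_j Z_{s_j}\subsetneq S$ would yield a smaller consistent subset also ignores that the vertices of $S\setminus\bigcup_j Z_{s_j}$ would themselves need to be covered. So your proposal is more honest than the source --- it names the exact step that must be proved --- but that step is a genuine gap in both: it would need either a repaired definition (e.g., measuring coverage against $S$ rather than against all of $O_{s_j}$) or an additional structural hypothesis, not just minimality.
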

\begin{proof}
As $S$ covers $V(I)\setminus S$, it can be used to cover each leaf bar $s_j$ for any $j$, $0\leq j \leq 2d$. Therefore, we take $Z_{s_j}\subseteq S$ because $S$ covers each $I_{s_{j}}$. In addition, $\cup_{i=0}^{2d}Z_{s_i}=S$ because if $\cup_{i=0}^{2d}Z_{s_i}\subset S$, then a proper subset of $S$ is a consistent subset, which contradicts itself. We need to meet all three conditions of the definition \ref{def}. The first condition of the definition \ref{def} is satisfied by $\cup_{i=0}^{2d}Z_{s_i}=S$ and $\cup_{i=0}^{2d}I_{s_i}$ is covered by $S$. Note that $\cup_{i=0}^{2d}I_{s_i})=V(I)\setminus S$ and $\cup_{i=0}^{2d}Z_{s_i}=S$. Thus, $V(I)=(\cup_{i=1}^{m}I_{s_i})\cup (\cup_{i=1}^{m}Z_{s_i})$, and $(\cup_{i=1}^{m}I_{s_i})\cap (\cup_{i=1}^{m}Z_{s_i})=\emptyset$. Therefore, the second and third conditions of definition \ref{def} are met.
\end{proof}
Consistent subsets of $I$, including a \mcs, result in a leaf bar cover twice the size of the consistent subset. Our goal is to calculate an optimum leaf bar cover and obtain a consistent subset of "small" size.
\subsection{Computing an Optimal Leaf Bar Cover}\label{subsection2.1}
We define $O(0,2n+1)$ as the leaf bar cover that minimizes the number of bars in $O(0,2n+1)$; that is, it minimizes $\lvert O(0,2n+1)\rvert$. In \ref{alg:a}, we present a dynamic algorithm that yields $O(0,2n+1)$. We employ the right-to-left swift approach, as shown below. It is worth noting that if a bar $s$ is a leaf bar, then we use $s\in L$. It should be noted that $L$ is not a set but rather a notation that denotes whether $s$ is a leaf bar or not. If $s\in L$, it indicates a leaf bar. The symbol $\oplus$ represents the concatenation operator for sequences.
\noindent \begin{tcolorbox}[breakable,bicolor,
  colback=cyan!5,colframe=cyan!5,title=Interval Graph Construction.,boxrule=0pt,frame hidden]

\underline{\textbf{Dynamic Programming:}}\label{dp}
 \vspace{5pt}
To begin dynamic programming, we utilize the sets $X=\emptyset$, $Y=\emptyset$, and $R=\emptyset$. The right-to-left swift technique aims to identify a bar $s=(k..l)$ such that $X\cup Z_s$ covers $Y\cup I_s$ (see definitions in Section \ref{intro}), with the condition that the remaining part $\lvert O(0,k) \rvert$ becomes minimized for all feasible values of $k$ and $l$. As we explore right-to-left swift, it is important to examine whether all intervals having at least one endpoint in the set $\{1,\dots, k+1,k+2,\dots,2n\}$ appear in the set $X\cup Z_s \cup Y\cup I_s$ or not. If yes, we must terminate the recursion steps of the dynamic programming; otherwise, we pick $X\gets X\cup Z_s$ and $Y\gets Y\cup I_s$. That is, $X$ represents the consistent subset that covers $Y$. Note that $X$ and $Y$ must be disjoint; otherwise, it does not satisfy the property of a consistent subset (see Observation \ref{easyobserve}). By meeting all of the conditions listed above, we must dynamically discover $\lvert O(0,k)\rvert $ so that it becomes minimized for all possible values of $k$ and $l$. We make $R\gets R\cup \{s\}$, which means that $R$ becomes the leaf bar cover, and it will become an optimal leaf bar cover after finishing all the recursions of the dynamic programming \ref{alg:a}.
\end{tcolorbox}
\noindent \begin{tcolorbox}[breakable,bicolor,
  colback=cyan!5,colframe=cyan!5,title=Interval Graph Construction.,boxrule=0pt,frame hidden]

\underline{\textbf{Explanation:}}
 \vspace{5pt}
Let $R=\{s_1,\dots, s_{m}\}$ be the set obtained by the above dynamic programming \ref{alg:a}. First, we show that $R$ is a leaf bar cover. Suppose that $R$ is not a leaf bar cover; that is, at least one of the three conditions of the definition \ref{def} is not satisfied. According to dynamic programming, every interval of $V(I)$ must appear in $X\cup Y$ because every $\lvert O(0,k)\rvert$ will be minimized. We also check the condition that all the intervals with at least one endpoint in the set $\{1,\dots, k + 1, k + 2,\dots, 2n\}$ appear in the set $X\cup Z_s \cup Y\cup I_s$ before starting the next iteration; otherwise, $X$ becomes $X\gets X\cup Z_s$ and $Y\gets Y\cup I_s$ (see dynamic programming \ref{dp}). So, the second condition of the  definition \ref{def} is satisfied. Also, in dynamic programming \ref{dp}, in every recursion, we use $X\cap Y=\emptyset$, which satisfies the third condition of the definition \ref{def}. If the first condition of the definition \ref{def} is not satisfied, then there exist an interval $I[j]\in  \cup_{i=0}^{d}I_{s_i}$ such that $I[j]$ is not covered by $\cup_{i=0}^{d}Z_{s_i}$, which is also a contradiction because $I_s$ is covered by $Z_s$. We will discuss the running time of this dynamic programming later.
\end{tcolorbox}

\SetKwComment{Comment}{/* }{ */}
\begin{algorithm}
\caption{An algorithm to find an optimal number $\lvert O(0..2n+1)\rvert$}\label{alg:a}
\KwData{An interval graph $I$ and set of bars $L$. Some empty sets $X=\emptyset$, $Y=\emptyset$, $R=\emptyset$.}
\KwResult{ $\lvert O(0,2n+1)\rvert \leq 2\lvert OPT\rvert +1$. }

$N \gets n$\;
\For{$l \gets 1$ to $2n+1$}{
  \For{$k \gets 0$ to $l$}{
   
     \If{$s=(k..l)\in L$ and $I_{(k..l)}\neq \emptyset$}{
     Find $Z_s$ from Algorithm \ref{alg:b};
     
      \If{$X\cup Z_s$ covers $Y\cup I_s$}{
       $X_1 \gets X\cup Z_s$; $Y_1 \gets Y\cup I_s$;
       
       \For{$i \gets k+1$ to $2n$}{
        \If{$I[i]\in X_1\cup Y_1$ and $X_1\cap Y_1=\emptyset$}{
          $\lvert O(0,2n+1)\rvert \gets \lvert O(x,k)\rvert+1$;
          $R\gets R\cup \{s\}$;
        $X \gets X\cup Z_s$; $Y \gets Y\cup I_s$;
        }
     }
   
   }
   }
  
  }
}
Return $\lvert O(0,2n+1)\rvert$;

Return $R$ (note that $R$ is actually $O(0,2n+1)$);

\end{algorithm}

\begin{lemma}\label{lemma3}
The above dynamic programming \ref{alg:a} produces an optimal leaf bar cover.
\end{lemma}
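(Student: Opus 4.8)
The plan is to prove Lemma~\ref{lemma3} with the standard dynamic-programming recipe: I will show (i) that every set $R$ returned by Algorithm~\ref{alg:a} is a genuine leaf bar cover, so $\lvert O(0,2n+1)\rvert$ is an \emph{upper} bound on the optimum, and (ii) that $\lvert O(0,2n+1)\rvert$ is also a \emph{lower} bound, i.e.\ it never exceeds the size of a true optimal leaf bar cover. Combining the two directions gives optimality. It is convenient to read $\lvert O(0,k)\rvert$ as the minimum number of bars in a \emph{partial} leaf bar cover of the sub-instance lying to the left of the point $k$, and to argue by strong induction on $k$ (equivalently, on the number of endpoints already consumed by the right-to-left sweep).

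For the feasibility direction (i) I would formalize the ``Explanation'' box: the algorithm commits to a bar $s=(k..l)$ only after checking that $X\cup Z_s$ covers $Y\cup I_s$, that $X\cap Y=\emptyset$, and that every interval with an endpoint in the already-processed range lies in $X\cup Y$. These invariants are precisely Conditions~1--3 of Definition~\ref{def} accumulated over the chosen bars, so the returned $R$ meets the definition of a leaf bar cover and $\mathrm{OPT}\le \lvert O(0,2n+1)\rvert$. The work here is routine bookkeeping, maintaining $X=\cup_i Z_{s_i}$ and $Y=\cup_i I_{s_i}$ disjoint with union growing to all of $V(I)$.

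For the optimality direction (ii) I would use a cut-and-paste (exchange) argument. Fix an optimal leaf bar cover $D^{*}=\{s_1,\dots,s_m\}$, order its bars left to right, and isolate the rightmost one $s=(k..l)$. I would show that, relative to the cover already accounted for to its right, $s$ satisfies exactly the transition test of Algorithm~\ref{alg:a}, so the algorithm considers $s$ and recurses on the point $k$. The remaining bars $D^{*}\setminus\{s\}$ then restrict to a partial leaf bar cover of the sub-instance left of $k$; were it not of minimum size, replacing it by a smaller one would yield a strictly smaller overall leaf bar cover, contradicting the optimality of $D^{*}$. This is the optimal-substructure property $\lvert O(0,2n+1)\rvert\le \lvert O(0,k)\rvert+1$ realized along $D^{*}$, and the induction hypothesis closes the bound $\lvert O(0,2n+1)\rvert\le m=\mathrm{OPT}$.

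The step I expect to be the main obstacle is justifying that a clean cut at the point $k$ exists, because the useful cover $Z_s$ need not lie to the right of its bar: by construction $Z_s\subseteq O_s$ can contain intervals of $Q_L$ (both endpoints $\le k$) or of $Q_O$ (straddling, left endpoint $\le k$ and right endpoint $\ge l$), so the coverage Condition~1 is \emph{global} rather than bar-local. The whole point of threading the pair $(X,Y)$ through the recursion is to localize this interaction, and I must show that once the split point $k$ is fixed the intervals strictly left of $k$ are covered only by left bars, those strictly right only by right bars, and the straddling ones by a bar whose $Z_s$ is determined independently of how the left part is solved. Establishing that the minimum over the left sub-instance therefore depends on $k$ alone, and not on the entire accumulated state, is the crux that makes the $O(0,k)$ recurrence well defined and the exchange argument sound.
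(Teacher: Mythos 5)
Your plan is the same one the paper uses---feasibility of the returned $R$ (the paper's ``Explanation'' box) plus the optimal-substructure recurrence $\lvert O(0,2n+1)\rvert=\lvert O(0,k)\rvert+1$ established by a cut at the rightmost bar---so in outline you and the paper agree. The problem is that the step you yourself single out as the crux, namely that the subproblem value $O(0,k)$ is well defined as a function of the cut point $k$ alone, is precisely the content of the lemma, and you do not supply an argument for it; you only name the obstacle. As stated, your proof is a plan with the hard part left open. The difficulty is real for two reasons. First, as you note, $Z_s\subseteq O_s$ may contain intervals of $Q_L$ (entirely left of $k$) or $Q_O$ (straddling), so a bar chosen on the right claims intervals living in the left sub-instance; Conditions 2 and 3 of Definition \ref{def} then force the left partial cover to avoid re-using those intervals in any $I_{s_i}$, which couples the two sides. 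Second, and more subtly, ``covers'' is a nearest-neighbor condition: whether $\cup_i I_{s_i}$ is covered by $\cup_i Z_{s_i}$ depends on the \emph{entire} union $\cup_i Z_{s_i}$, because adding a vertex to the candidate set can shrink nearest-neighbor distances and destroy coverage that previously held. This is exactly why Algorithm \ref{alg:a} must re-test ``$X\cup Z_s$ covers $Y\cup I_s$'' against the accumulated state $(X,Y)$ at every step; the true DP state is $(X,Y)$, not the single integer $k$, and your exchange argument (``replace the left part of $D^{*}$ by a smaller left cover'') is not obviously sound, since the smaller left cover's $Z$-sets may interact with the right bars' $Z$-sets and violate Condition 1 or 3 for the recombined solution.

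To close the gap you would need a localization lemma of roughly the following form: for any leaf bar $s=(k..l)$ and any interval $v$ with both endpoints $\le k$, the nearest neighbors of $v$ in any admissible union $\cup_i Z_{s_i}$ are unaffected by which intervals with both endpoints $>l$ are included, and symmetrically; together with an argument that the intervals of $Q_L\cup Q_O$ pulled into $Z_s$ by Algorithm \ref{alg:b} never need to appear in any $I_{s_j}$ of an optimal left partial cover. Neither you nor the paper proves such a statement---the paper's proof of Lemma \ref{lemma3} simply asserts ``Clearly, $O(0,k)\oplus(k..j)$ is a leaf bar cover'' and reads off the recurrence---so your diagnosis of where the argument is fragile is accurate and is the most valuable part of your write-up, but it does not yet constitute a proof.
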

\begin{proof}
Let $k<l$ be two integers in the dynamic programming \ref{alg:a}, such that $(k..l)\in L$ and $\lvert O(0,k)\rvert$ is minimized over all $k,j$, $0\leq k<l\leq 2n+1$. Clearly, $O(0,k)\oplus (k..j)$ is a leaf bar cover. Since $k$ minimizes $\lvert O(0,k)\rvert$, $\lvert O(0,2n+1)\rvert=\lvert O(0,k)\oplus (k..j)\rvert=\lvert O(0,k)\rvert +1$, which is given via dynamic programming.
\end{proof}

\noindent The following lemma states that this leaf bar cover is sufficient for our requirements.
\begin{lemma}\label{lemma2.3}
Let \mcs be the minimum consistent subset of the given interval graph. Therefore, $\lvert O(1,2n)\rvert \leq 2\lvert \mcs \rvert +1$.
\end{lemma}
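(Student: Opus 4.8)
The plan is to bound the size of the optimal leaf bar cover $O(1,2n)$ in terms of the size of a minimum consistent subset, using the correspondence between consistent subsets and leaf bar covers established in Lemma~\ref{lemma2} together with the optimality of the dynamic programming output guaranteed by Lemma~\ref{lemma3}. First I would invoke Lemma~\ref{lemma2}: let $S$ be a \mcs (or, if $S$ is not minimal with respect to inclusion, pass to an inclusion-minimal consistent subset $S'\subseteq S$ so that the hypotheses of Lemma~\ref{lemma2} are met, noting $\lvert S'\rvert \le \lvert \mcs\rvert$). Writing $d=\lvert S'\rvert$ and listing the $2d$ endpoints of $S'$ in $P(S')$, Lemma~\ref{lemma2} tells us that the bars $\{s_0,s_1,\dots,s_{2d}\}$ obtained from consecutive endpoints (plus the two boundary bars at $0$ and $2n+1$ when needed) form a \emph{valid} leaf bar cover. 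Counting these bars gives at most $2d+1$ bars, since there are $2d-1$ interior gaps between the $2d$ endpoints, plus the (at most two) boundary bars, which combine to give the $+1$ after accounting for the indexing.

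Next I would compare this particular leaf bar cover against the optimal one. Since $O(1,2n)$ is defined as the leaf bar cover minimizing the number of bars, and the cover $\{s_0,\dots,s_{2d}\}$ from Lemma~\ref{lemma2} is one feasible leaf bar cover of size at most $2d+1$, optimality yields $\lvert O(1,2n)\rvert \le 2d+1 \le 2\lvert \mcs\rvert+1$. This is the core of the argument: the dynamic program of Lemma~\ref{lemma3} cannot do worse than the explicitly constructed cover coming from the minimum consistent subset, so the inequality follows immediately by minimality. The statement $\lvert O(1,2n)\rvert \le 2\lvert \mcs\rvert+1$ is then exactly what we want.

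The step I expect to be the main obstacle is the bookkeeping of the bar count, namely pinning down precisely why the cover has at most $2d+1$ bars rather than, say, $2d$ or $2d+2$. The $2d$ endpoints create $2d-1$ interior bars $s_1,\dots,s_{2d-1}$, and the construction adds a left boundary bar $s_0=(0..i_1)$ only when $i_1>1$ and a right boundary bar $s_{2d}=(i_{2d}..2n+1)$ only when $i_{2d}<2n$; I would verify carefully that in the worst case (both boundary bars present) the total is $2d+1$, and that the degenerate bars $(i..i+1)$ with empty $I_s$ (Observation~\ref{observation3}) do not inflate the count in a way that breaks the bound. A secondary subtlety is the possible need to replace the given \mcs by an inclusion-minimal consistent subset before applying Lemma~\ref{lemma2}, since that lemma explicitly requires that no proper subset of $S$ be consistent; I would make sure this replacement only decreases the size and hence preserves the desired inequality. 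Once the count is confirmed, the comparison with the optimum via Lemma~\ref{lemma3} finishes the proof with no further computation.
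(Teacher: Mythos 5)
Your proposal matches the paper's proof essentially exactly: both construct the leaf bar cover $\{s_0,\dots,s_{2d}\}$ from the $2d$ endpoints of the \mcs via Lemma~\ref{lemma2}, count at most $2d+1$ bars, and conclude by the minimality of $O(1,2n)$ among leaf bar covers. Your extra caution about passing to an inclusion-minimal consistent subset is unnecessary (a proper consistent subset of an \mcs would contradict its minimality, which is exactly the one-line justification the paper gives), but it does no harm.
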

\begin{proof}
Let $d$ be the size of \mcs and assume that $i_1, i_2, \dots, i_{2d}$ are the points in $P(\mcs)$. No proper subset of \mcs is a consistent subset. Using Lemma \ref{lemma2}, we show that $s_j=(i_{j}..i_{j+1})$ for any $j$ with $1\leq j \leq 2d-1$. If $i_1>1$, take $s_0=(0..i_1)$; if $i_{2d}<2n$, take $s_{2d}=(2d..2n+1)$. Therefore, $S=\{s_0,s_1, \dots,s_{2d}\}$ represents a leaf bar cover. By definition, $O(1,2n)$ is the smallest leaf bar cover, therefore $\lvert O(1,2n)\rvert \leq \lvert S\rvert \leq 2d+1$.
\end{proof}
\subsection{Computing a Consistent Subet from a Leaf Bar Cover}\label{subsection2.4}
Using the structural property of leaf bar cover in earlier sub-sections, we compute a consistent subset of $I$ that is within $2\alpha$ times the size of $O(1,2n)$. Since $O(1,2n)$ is roughly within twice the size of a minimum consistent subset, we will now demonstrate a $(4\alpha+2)$-approximation algorithm. We apply an Algorithm \ref{alg:c} to find a consistent subset called \acs.

\noindent \begin{tcolorbox}[breakable,bicolor,
  colback=cyan!5,colframe=cyan!5,title=Interval Graph Construction.,boxrule=0pt,frame hidden]

\underline{\textbf{Algorithm for Finding \acs:}}
 \vspace{5pt}
At first, we use $\acs = \emptyset$. We start finding $O(0,2n+1)$ in the dynamic programming \ref{alg:a}. In \ref{alg:a}, finding $O(0,2n+1)$ yields $\lvert O(0,2n+1)\rvert=\lvert O(i,k)\oplus (k..j)\rvert=\lvert O(i,k)\rvert+1$, where $s=(k..j)$ represents a leaf bar. When $s$ is a leaf bar, we apply Algorithm \ref{alg:b} to determine $Z_s$. We modify $\acs \gets \acs \cup Z_s$. We repeat this approach until we get $O(0,2n+1)$ in dynamic programming \ref{alg:a}.
\end{tcolorbox}
\SetKwComment{Comment}{/* }{ */}
\begin{algorithm}
\caption{Algorithm for Finding \acs}\label{alg:c}
\KwData{An interval graph $I$, and color set $C=\{c_1, c_2, \dots, c_{\alpha}\}$. $\acs=\emptyset$.}
\KwResult{ A consistent subset \acs.}
\underline{\textbf{step 1:}} Start finding $O(0,2n+1)$ using dynamic programming \ref{alg:a};

\underline{\textbf{step 2:}} We find $Z_{s}$ where $s=(k..j)$ where $\lvert O(0,2n+1)\rvert=\lvert O(i,k)\oplus (k..j)\rvert=\lvert O(i,k)\rvert +1$ while using Algorithm \ref{alg:b} in \underline{\textbf{step 1}}; $\acs \gets \acs \cup Z_s$;

\underline{\textbf{step 3:}} Update \underline{\textbf{step 1}} until we get $O(0,2n+1)$ using dynamic programming \ref{alg:a}.

Return \acs.
\end{algorithm}
\begin{theorem}
\acs is a consistent subset of the interval graph $I$.
\end{theorem}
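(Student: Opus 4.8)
The plan is to identify \acs with the union of the useful covers taken over the optimal leaf bar cover returned by Algorithm~\ref{alg:a}, and then read off consistency directly from the three defining conditions of a leaf bar cover. Write $R=\{s_1,\dots,s_m\}=O(0,2n+1)$ for the set of leaf bars selected by the dynamic programming, and set $\mathcal{I}=\cup_{i=1}^{m}I_{s_i}$ and $\mathcal{Z}=\cup_{i=1}^{m}Z_{s_i}$. Inspecting Algorithm~\ref{alg:c}, which adjoins exactly $Z_s$ to \acs each time the dynamic programming commits to a leaf bar $s$, gives $\acs=\mathcal{Z}$. By Lemma~\ref{lemma3}, $R$ is a leaf bar cover, so Definition~\ref{def} applies to $\mathcal{I}$ and $\mathcal{Z}$.

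First I would record the three consequences of Definition~\ref{def}: (1) $\mathcal{I}$ is covered by $\mathcal{Z}$; (2) $V(I)=\mathcal{I}\cup\mathcal{Z}$; and (3) $\mathcal{I}\cap\mathcal{Z}=\emptyset$. To prove \acs is consistent, I take an arbitrary vertex $v\in V(I)$ and show $C(v)\in C(\NN(v,\mathcal{Z}))$. By (2) and (3), $v$ lies in exactly one of $\mathcal{Z}$ or $\mathcal{I}$. If $v\in\mathcal{Z}$, then $v$ is its own nearest neighbour in $\mathcal{Z}$ at distance $0$, so trivially $C(v)\in C(\NN(v,\mathcal{Z}))$. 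If $v\in\mathcal{I}$, then condition (1) supplies a same-coloured nearest neighbour of $v$ inside $\mathcal{Z}$, i.e.\ $v$ is covered. In either case $v$ is covered by $\mathcal{Z}=\acs$, so every vertex is covered and \acs is a consistent subset. This mirrors Observation~\ref{lemma1b}, with the single set $O_s$ replaced by the accumulated union $\mathcal{Z}$.

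The one point that needs care, and the step I expect to be the real obstacle, is that the covering relation is defined relative to the nearest-neighbour set \emph{inside} the candidate subset, and those distances can only decrease as the subset grows. Concretely, Lemma~\ref{lemmaimportant} guarantees only the \emph{local} statement that each $Z_{s_i}$ covers its own $I_{s_i}$; a vertex covered locally by some $Z_{s_i}$ could a priori lose its same-coloured nearest neighbour once nearer, differently-coloured intervals from another $Z_{s_j}$ enter $\mathcal{Z}$. What rescues the argument is that condition (1) of a leaf bar cover is the \emph{global} covering of $\mathcal{I}$ by the full union $\mathcal{Z}$, and Algorithm~\ref{alg:a} maintains exactly this invariant by testing ``$X\cup Z_s$ covers $Y\cup I_s$'' before every commitment; thus the ``leaf bar cover'' conclusion of Lemma~\ref{lemma3} already encodes the global condition I invoke. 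I would therefore state explicitly that condition (1) is being used in its global form for the whole union $\mathcal{Z}$, rather than in the per-bar form of Lemma~\ref{lemmaimportant}, so that the competing-distances issue cannot arise.
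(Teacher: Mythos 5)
Your proof is correct and follows essentially the same route as the paper's: identify \acs with the union of the useful covers over the leaf bars selected by the dynamic programming, and read off consistency from the three conditions of Definition~\ref{def}. The one difference is to your credit: where the paper concludes from $I[i]\in I_s$ that $I[i]$ ``must be covered by $Z_s\subseteq \acs$'' --- a per-bar statement that by itself does not exclude a nearer, differently-coloured interval arriving from some other $Z_{s_j}$ --- you explicitly invoke condition~(1) of the leaf bar cover in its global form for the whole union, which is the statement actually needed and the invariant the dynamic programming maintains.
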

\begin{proof}
\noindent Consider $I[i]\in V(I)\setminus \acs$. We argue that $I[i]$ is covered by \acs. First, we prove that $I[i]$ must occur in a leaf bar formed by $O(0,2n+1)$ in dynamic programming \ref{alg:a}. As $I[i]\in V(I)\setminus \acs$, $I[i]\notin \acs$. \acs is the union of every $Z_s$ (see Algorithm \ref{alg:c}), where $s$ is a leaf bar created via dynamic programming \ref{alg:a}. Suppose $s_0 , s_1, \dots, s_{m}$ are the total leaf bar created by $O(0,2n+1)$, then according to the definition of $O(0,2n+1)$ (see definition \ref{def}), $V(I)=(\cup_{j=0}^{m}I_{s_j})\cup (\cup_{j=0}^{m}Z_{s_j})$ and $(\cup_{j=0}^{m}I_{s_j})\cap (\cup_{j=0}^{m}Z_{s_j})=\emptyset$. Also, $\acs = \cup_{j=0}^{m}Z_{s_j}$. Thus, $I[i]\in \cup_{j=0}^{m}I_{s_j}$, implying that $I[i]$ must occur in a leaf bar produced by $O(0,2n+1)$ in dynamic programming \ref{alg:a}. We suppose $I[i]\in I_s$, thus $I[i]$ must be covered by $Z_s \subseteq \acs$. Hence, $I[i]$ is covered by \acs. $I[i]$ is arbitrary, hence \acs is a consistent subset.
\end{proof}

\subsection{Analysis of the Algorithm}

\begin{theorem}
The Algorithm \ref{alg:c} computes the consistent subset \acs with the cardinality $\lvert \acs\rvert \leq (4\alpha+2) \lvert \mcs\rvert$, where \mcs is the minimum consistent subset of $I$.
\end{theorem}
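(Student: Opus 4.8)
The plan is to multiply the per-bar size bound by the bound on the number of bars in the optimal leaf bar cover, and then use the color lower bound on $\lvert\mcs\rvert$ to absorb the leftover additive term. First I would recall the structure of the output: by Algorithm \ref{alg:c} the returned set satisfies $\acs=\cup_{j=0}^{m}Z_{s_j}$, where $s_0,s_1,\dots,s_m$ are exactly the leaf bars (with $I_{s_j}\neq\emptyset$) produced by the optimal leaf bar cover $O(0,2n+1)$ computed by the dynamic program in Algorithm \ref{alg:a}. The preceding theorem already guarantees that \acs is a consistent subset, so only the cardinality estimate remains to be proved here.

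The second step is to control the number of contributing bars. By Lemma \ref{lemma3}, $O(0,2n+1)$ is a minimum-size leaf bar cover, and by Lemma \ref{lemma2.3} its size is at most $2\lvert\mcs\rvert+1$; equivalently $m+1\leq 2\lvert\mcs\rvert+1$. Hence at most $2\lvert\mcs\rvert+1$ useful covers $Z_{s_j}$ enter the union defining \acs. The third step bounds the total size: by Lemma \ref{lemmaimportant}, each $Z_{s_j}$ returned by Algorithm \ref{alg:b} satisfies $\lvert Z_{s_j}\rvert\leq 2\alpha$, so applying a union bound over the bars gives
\[
\lvert\acs\rvert=\left\lvert\bigcup_{j=0}^{m}Z_{s_j}\right\rvert\leq\sum_{j=0}^{m}\lvert Z_{s_j}\rvert\leq 2\alpha\,(2\lvert\mcs\rvert+1)=4\alpha\lvert\mcs\rvert+2\alpha.
\]

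The main obstacle — the only place that needs an idea rather than a routine estimate — is closing the gap between $4\alpha\lvert\mcs\rvert+2\alpha$ and the target bound $(4\alpha+2)\lvert\mcs\rvert$. The key is Observation \ref{basicobservation}: a consistent subset must contain at least one vertex of each of the $\alpha$ colors, so $\lvert\mcs\rvert\geq\alpha$. Substituting $2\alpha\leq 2\lvert\mcs\rvert$ into the additive term yields
\[
\lvert\acs\rvert\leq 4\alpha\lvert\mcs\rvert+2\lvert\mcs\rvert=(4\alpha+2)\lvert\mcs\rvert,
\]
which completes the proof.

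Beyond that, I expect only bookkeeping care to be needed. One should confirm that the boundary bars $s_0$ and $s_{2d}$, and any bar with $I_s=\emptyset$ (which contributes no $Z_s$), are counted consistently when passing between the $O(1,2n)$ of Lemma \ref{lemma2.3} and the $O(0,2n+1)$ used by the algorithm, since the two differ only in the treatment of the endpoints $0$ and $2n+1$. I would also note explicitly that the union bound is safe even when distinct $Z_{s_j}$ share intervals, because such overlaps can only decrease $\lvert\acs\rvert$ relative to $\sum_j\lvert Z_{s_j}\rvert$, so the inequality remains valid in the direction we need.
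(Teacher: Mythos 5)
Your proof is correct and follows exactly the same route as the paper's: bound each $\lvert Z_{s_j}\rvert$ by $2\alpha$ via Algorithm \ref{alg:b}, bound the number of bars by $2\lvert\mcs\rvert+1$ via Lemma \ref{lemma2.3}, and absorb the additive $2\alpha$ using $\lvert\mcs\rvert\geq\alpha$ from Observation \ref{basicobservation}. Your extra remarks on the union bound and the boundary bars are just careful bookkeeping the paper leaves implicit.
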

\begin{proof}
\noindent As previously stated, for each leaf bar $s$ generated by dynamic programming \ref{alg:a}, $Z_s \subseteq ACS$. Using Algorithm \ref{alg:b}, we obtain $\lvert \acs \rvert \leq 2\alpha \lvert O(0,2n+1) \rvert$. Using Lemma \ref{lemma2.3}, we get $\lvert \acs \rvert \leq 2\alpha \lvert O(0,2n+1) \rvert\leq 2\alpha (2\lvert \mcs \rvert +1) \leq (4\alpha +2)\lvert \mcs\rvert$ (because $\lvert \mcs\rvert \geq \alpha$ using observation \ref{basicobservation}).
\end{proof}
\begin{theorem}
The running time to find \acs in the Algorithm \ref{alg:c} is $\mathcal{O}(n^9)$.
\end{theorem}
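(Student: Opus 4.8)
The plan is to bound the total running time by analyzing the cost of the dynamic program in Algorithm~\ref{alg:a} together with the cost of its subroutine Algorithm~\ref{alg:b}, since Algorithm~\ref{alg:c} simply drives Algorithm~\ref{alg:a} and accumulates the useful covers $Z_s$ along the way. First I would recall that there are $O(n)$ endpoints (in fact $2n+2$ of them counting the artificial points $0$ and $2n+1$), so a bar $s=(k..l)$ is specified by a pair of endpoints, giving $O(n^2)$ candidate bars. Before the dynamic program runs, I would account for the preprocessing that builds the Boolean matrix $M$: there are $O(n^2)$ bars to test, and as noted in the paragraph preceding Lemma~\ref{lemmaimportant}, each leaf-bar test costs $O(n^2)$ because it computes $\NN(I_s,O_s)$ and compares $C(\NN(I_s,O_s))$ with $C(I_s)$. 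Thus building $M$ costs $O(n^4)$, which I expect to be dominated by the main loop.

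Next I would turn to the nested loop structure of Algorithm~\ref{alg:a}. The outer two loops over $l$ and $k$ contribute a factor of $O(n^2)$. Inside, for each leaf bar $s=(k..l)$ the algorithm calls Algorithm~\ref{alg:b} to compute $Z_s$; I would argue that this call costs $O(n^2)$, matching the cost of locating $\NN(I_s,O_s)$ and scanning the partition $Q=Q_L\cup Q_R\cup Q_O$ for the required extremal endpoints of each color. The algorithm then tests whether $X\cup Z_s$ covers $Y\cup I_s$, a coverage check over $O(n)$ intervals that again costs at most $O(n^2)$ when one recomputes nearest neighbours, and finally runs the innermost loop $i\gets k+1$ to $2n$ of length $O(n)$ that verifies every relevant interval appears in $X_1\cup Y_1$. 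Multiplying the loop depth $O(n^2)\cdot O(n)$ by the per-iteration work gives the dominant term.

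The key accounting step is therefore to pin down the per-iteration cost precisely: the outer loops give $n^2$, the inner verification loop gives $n$, and the work performed for each triple consists of a $Z_s$ computation and coverage verification each costing $n^2$. To reach the claimed $O(n^9)$ I would make explicit that the coverage and nearest-neighbour recomputations are themselves nested inside the innermost index loop (so that the matrix $M$ and the sets $X,Y$ are re-examined per index $i$), which contributes the additional quadratic factors; combining $n^2$ (outer) $\times\, n$ (inner $i$-loop) $\times\, n^2$ ($Z_s$ via Algorithm~\ref{alg:b}) $\times\, n^2$ (coverage test) $\times\, n^2$ (recomputing $\NN$ for the global consistency check) yields $O(n^9)$. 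I would then observe that the preprocessing of $M$ at $O(n^4)$ and the accumulation into $\acs$ in Algorithm~\ref{alg:c}, which only adds sets of size at most $2\alpha\le 2n$ a linear number of times, are both absorbed into this bound.

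The main obstacle I anticipate is justifying the exact exponent rather than merely an upper bound of the form $O(n^c)$: the informal prose of the dynamic program leaves ambiguous how many times nearest-neighbour and coverage data are recomputed from scratch inside each iteration, and a tighter implementation (precomputing all $\NN$ relations and all $Z_s$ once, then reusing them) would lower the exponent substantially. Consequently the honest statement is that $O(n^9)$ is a safe upper bound for the naive implementation described, obtained by multiplying five independent factors of $n$ or $n^2$; the delicate part is to argue that no single factor has been undercounted, so that the product genuinely dominates and the claimed polynomial running time holds, establishing that the whole $(4\alpha+2)$-approximation runs in polynomial time.
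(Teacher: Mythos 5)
Your proposal is correct and follows essentially the same route as the paper: both bound the running time by multiplying the $\mathcal{O}(n^2)$ choices of the bar $(k..l)$ in Algorithm~\ref{alg:a} against the $\mathcal{O}(n^2)$ costs of the leaf-bar/nearest-neighbour computations, the $\mathcal{O}(n^2)$ cost of Algorithm~\ref{alg:b}, the $\mathcal{O}(n^2)$ coverage check, and the $\mathcal{O}(n)$ inner verification loop to obtain $\mathcal{O}(n^2\cdot n^2\cdot n^2\cdot n^2\cdot n)=\mathcal{O}(n^9)$. Your closing caveat--that this is a generous multiplicative over-count for a naive implementation and that a tighter (additive) accounting would lower the exponent--applies equally to the paper's own proof, which composes the same five factors multiplicatively.
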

\begin{proof}
The \underline{\textbf{step 1}} of the Algorithm \ref{alg:c} follows dynamic programming \ref{alg:a}. We assume any arbitrary number for $l$ and $k$, which takes $\mathcal{O}(n^2)$. To determine whether $(k..l)$ is a leaf bar or not ((line $4$ in dynamic programming \ref{alg:a}), we require $\mathcal{O}(n^2)$ time because for each interval $I[i]$ of $I_{(k..l)}$, we must locate $\NN(I[i],O_{(k..l)})$ and check if there is an interval in $\NN(I[i],O_{(k..l)})$ of color $C(I[i])$ or not. We now need to find $Z_s$ using Algorithm \ref{alg:b} (line $5$ in dynamic programming \ref{alg:a}).
\smallskip

\noindent We obtain $\NN(I_s, O_s)$ (line $2$ in Algorithm \ref{alg:b}) in $\mathcal{O}(n^2)$ using the same argument as above, and the rest of the Algorithm \ref{alg:b} takes $\mathcal{O}(n+2\cdot n)$ time. To find $Z_s$, we apply the Algorithm \ref{alg:b}, which takes $\mathcal{O}(n^2+n+2\cdot n)=\mathcal{O}(n^2)$ time.

\smallskip

\noindent To determine whether $X\cup Z_s$ covers $Y\cup I_s$ (line $6$ in dynamic programming \ref{alg:a}), we need $\mathcal{O}(n^2)$ and follow the same procedure as previously. Next, we need to check whether $I[i]\in X_1\cup Y_1$ and $X_1\cap Y_1= \emptyset$ or not for each $i$, $k+1\leq i\leq 2n$. This takes $\mathcal{O}(n)$ time. \underline{\textbf{step 2}} and \underline{\textbf{step 3}} are a recursive version of \underline{\textbf{step 1}}. The Algorithm \ref{alg:c} has a running time of $\mathcal{O}(n^2\cdot n^2 \cdot n^2 \cdot n^2 \cdot n)=\mathcal{O}(n^9)$ when all steps are combined. 
\end{proof}
\section{\apx-hardness of \mcs for Circle Graphs}\label{section4}
 \noindent The definitions of circle graphs and the dominating set of a graph are given below.
\begin{definition}
\begin{mdframed}[backgroundcolor=red!10,topline=false,bottomline=false,leftline=false,rightline=false]\label{def4} 

 A graph $T$ is a circle graph if there is a one-to-one correspondence between vertices in $V(T)$ and a set $X$ of chords in a circle such that two vertices in $V(T)$ are adjacent if and only if the corresponding chords in $X$ intersect. The \emph{dominating set} of a graph $G$ is a subset $S\subseteq V(G)$ such that for each vertex $v\in V(G)$, $\neigh[v]\cap S \neq \emptyset$.
\end{mdframed}
\end{definition}
\noindent We obtain a gap-preserving reduction from the minimum dominating set problem in circle graphs, which is known to be \apx-hard \cite{circle}. Consider the geometric representation of a circle graph $T$ with $\lvert V(T)\rvert = n$ chords.  Taking a closer look at the \apx-hardness proof \cite{circle}, we can suppose that no two chords share an endpoint; that is, there are exactly $2n$ distinct points on the circle determining the endpoints of the chords. We reduce an instance of a dominating set in a circle graph $T$ (with $\lvert V(T) \rvert=n$ and $\lvert E(T) \rvert =m$) to an instance of a dominating set of a circle graph $T^{'}$, as follows.

\begin{figure}
    \centering
    \includegraphics[width=0.85\textwidth]{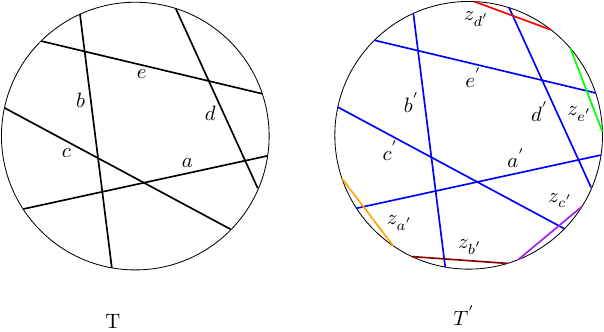}
    \caption{An example reduction}
    \label{fig:enter-label}
\end{figure}

\noindent \begin{tcolorbox}[breakable,bicolor,
  colback=cyan!5,colframe=cyan!5,title=Interval Graph Construction.,boxrule=0pt,frame hidden]

\underline{\textbf{Circle Graph Construction:}}
 \vspace{5pt}
We put a new chord $v^{'}$ in the set $V(T^{'})$ corresponding to each chord $v\in V(T)$. We include these chords in the set $V_1$. Therefore, $\lvert V_1\rvert =n$. We arrange the chords of $V_1$ in $T^{'}$ so that if any two chords $u$ and $v$ of $V(T)$ intersect, then the corresponding two chords $u^{'}$ and $v^{'}$ of $V_{1}$ are also intersect in $T^{'}$. For each chord $v^{'}\in V_{1}$, we establish a new chord $z_{v^{'}}$ in a set $V_{2}$ adjacent to $v^{'}$ only (see Figure \ref{fig:enter-label}). So, $\lvert V_2\rvert=n$. Each chord of $V_1$ has the color $c_1$, while each chord of $V_2$ has different colors that differ from $c_1$. We suppose that the color set of $V_2$ is $\{c_2, c_3, \dots, c_{n+1}\}$. This is a complete reduction. Therefore, $V(T^{'})=V_1\cup V_2$, $\lvert V(T^{'})\rvert =2n$, and $\lvert E(T^{'})\rvert =n+m$. Additionally, the color set of $T^{'}$ is $\{c_1, c_2, \dots, c_{n+1}\}$.
\end{tcolorbox}
\begin{lemma}{\label{lemma1_mscs_apx}}
$T$ has a dominating set of size $k$ if and only if $T^{'}$ has a consistent subset of size $n+k$.
\end{lemma}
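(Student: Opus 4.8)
The plan is to prove the biconditional in Lemma \ref{lemma1_mscs_apx} by constructing an explicit consistent subset from a dominating set for the forward direction, and by extracting a dominating set from a consistent subset for the reverse direction. The key structural facts I would rely on are: every chord $z_{v'}\in V_2$ is adjacent \emph{only} to its partner $v'\in V_1$; each $z_{v'}$ carries a unique color $c_{i}$ (with $i\ge 2$) that appears on no other vertex; and every chord of $V_1$ shares the single color $c_1$. By Observation \ref{basicobservation}, any consistent subset must contain at least one vertex of each color, so it must contain all $n$ vertices of $V_2$ (since each such color is realized by exactly one vertex). This immediately forces the ``$+n$'' in the count and reduces the problem to choosing which $V_1$-vertices to add.

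For the forward direction, suppose $D\subseteq V(T)$ is a dominating set of $T$ with $|D|=k$. I would take $S=V_2\cup D'$, where $D'=\{v':v\in D\}\subseteq V_1$ is the copy of $D$ in $T'$; then $|S|=n+k$. I must verify $S$ is consistent. Each $z_{v'}\in V_2=S$ is already in $S$. Each $v'\in D'$ is in $S$. The vertices to check are the uncovered $V_1$-vertices $v'\notin D'$: such a $v'$ has color $c_1$, so I need a $c_1$-colored vertex among its nearest neighbors in $S$. The only $c_1$-colored vertices are in $V_1$. Because $D$ dominates $T$, some $u\in D$ is adjacent to (or equals) $v$ in $T$, hence $u'\in D'\subseteq S$ is at distance $1$ from $v'$ in $T'$; I would argue no $S$-vertex lies strictly closer than distance $1$ from $v'$ (the partner $z_{v'}$ is at distance $1$ too but wrong color, and distance $0$ would mean $v'\in S$), so $u'$ is a nearest neighbor of the correct color $c_1$. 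This establishes that $v'$ is covered and hence $S$ is a consistent subset of size $n+k$.

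For the reverse direction, suppose $S$ is a consistent subset with $|S|=n+k$. As noted, $S\supseteq V_2$, so $S\cap V_1$ has size exactly $k$; set $D=\{v:v'\in S\cap V_1\}$, with $|D|=k$. I claim $D$ dominates $T$. Take any $v\in V(T)$; I must show $D\cap N_T[v]\neq\emptyset$. If $v'\in S$ then $v\in D\cap N_T[v]$ and we are done, so assume $v'\notin S$. Then $v'$ has color $c_1$ and must be covered by some same-colored nearest neighbor in $S$; since all $c_1$-vertices lie in $V_1$ and the unique neighbor $z_{v'}\in V_2$ has a different color, the covering vertex must be some $u'\in S\cap V_1$ at distance equal to $\dist(v',S)$. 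Here the main obstacle is a distance argument: I need that this covering $u'$ is actually adjacent to $v'$ in $T'$ (i.e. $\dist_{T'}(v',u')=1$), which would give $u\in D$ adjacent to $v$ in $T$ and finish the proof. The delicate point is ruling out the possibility that the nearest $c_1$-neighbor sits at distance $2$ or more (routed through some intermediate $V_1$-vertex not in $S$), in which case the induced domination might fail; I would resolve this by exploiting that $z_{v'}\in S$ sits at distance $1$ from $v'$, forcing $\dist(v',S)=1$, so any nearest neighbor of $v'$ in $S$ — in particular the $c_1$-colored one — is at distance exactly $1$ and thus genuinely adjacent to $v'$. Translating this adjacency back through the intersection-preserving correspondence between $V_1$ and $V(T)$ yields that $u$ dominates $v$, completing the argument.
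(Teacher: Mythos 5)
Your proposal is correct and follows essentially the same route as the paper: force $V_2\subseteq S$ via the unique colors, map the $k$ remaining $V_1$-vertices to a dominating set and vice versa, and use the fact that $z_{v'}\in S$ pins $\dist(v',S)=1$ so the covering $c_1$-vertex must be a genuine neighbor. If anything, your explicit handling of that last distance argument is slightly cleaner than the paper's phrasing of the same step.
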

\begin{proof}
$(\Rightarrow)$ 
\noindent Let $A$ be a dominating set of $T$ with size $k$. We're creating a consistent subset $B$ of $T^{'}$ as follow. Based on observation \ref{basicobservation}, $V_2\subseteq B$. As a result, we included all of the chords from $V_2$ in $B$. Assume $A=\{v_1, v_2, \dots, v_k\}$. We put $v_i^{'}$ in $ B$ for each $v_i\in A$. It is obvious that $B$ has at least one chord of each color from the collection $\{c_1, c_2, \dots, c_{n+1}\}$. Furthermore, for any vertex $v_l^{'}\in V(T^{'})\setminus B$, there exists a chord in $B$ of the type $v_i^{'}$ such that $\dist (v_l^{'}, v_i^{'})=1$; otherwise, the corresponding set $A$ cannot be a dominating set. Thus, $B$ is a consistent subset of size $n+k$.

\smallskip

\noindent $(\Leftarrow)$ Assume $B$ is a consistent subset of size $n + k$. We are creating a dominating set $A\subseteq V(T)$ where $\lvert A\rvert =k$. As per observation \ref{basicobservation}, $B$ must have at least one chord of each color. Each chord of the type $z_{v^{'}}$ has a different color, and this is the only chord in $T^{'}$ that has color $C(z_{v^{'}})$, hence $V_2 \subseteq B$. Consider $B\setminus V_2=\{v_1^{'}, v_2^{'}, \dots, v_{k}^{'}\}$. We claim that the chords in $T$ corresponding to the chords $B\setminus V_2$ in $T^{'}$ forms a dominating set, that is, $A=\{v_1, v_2,\dots,v_k\}$ is a dominating set. Assuming $A$ is not a dominating set, there exists $v_i\in V(T)\setminus A$ such that $\dist(v_i, A)>1$. Thus, $B$ is not a consistent subset because $\dist(v_i^{'}, B)>1$, $\dist(v_i^{'},z_{v_i^{'}})=1$, and $C(v_i^{'})\neq C(z_{v_i^{'}})$, which is a contradiction.
\end{proof}
\begin{theorem}
    The \mcs problem in circle graphs is \apx-hard.
\end{theorem}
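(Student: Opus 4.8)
The plan is to combine Lemma \ref{lemma1_mscs_apx} with the fact that minimum dominating set on circle graphs is \apx-hard \cite{circle}, and upgrade the biconditional of the lemma into a genuine \emph{gap-preserving} (or approximation-preserving) reduction. Since the theorem only claims \apx-hardness, it suffices to exhibit an $L$-reduction (or more generally a reduction preserving the existence of a \ptas) from \textsc{Minimum Dominating Set} on circle graphs to \mcs on circle graphs. First I would recall that the construction in the ``Circle Graph Construction'' box is computable in polynomial time and produces $T'$ with $|V(T')|=2n$, so the instance size blows up only by a constant factor.

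Next I would make the cost transfer explicit. Let $\mathrm{OPT}_{\mathrm{DS}}(T)$ denote the size of a minimum dominating set of $T$ and $\mathrm{OPT}_{\mathrm{MCS}}(T')$ the size of a minimum consistent subset of $T'$. Lemma \ref{lemma1_mscs_apx} gives a size-$k$ dominating set of $T$ exactly when $T'$ has a consistent subset of size $n+k$; applied to optima this yields the affine identity $\mathrm{OPT}_{\mathrm{MCS}}(T')=n+\mathrm{OPT}_{\mathrm{DS}}(T)$. Because every dominating set has size at least $1$ (indeed the whole of $V_2$, of size $n$, is forced into any consistent subset by Observation \ref{basicobservation}), we can bound $n$ linearly in $\mathrm{OPT}_{\mathrm{DS}}(T)$: since a circle graph on $n$ chords has a dominating set of size at most some constant times $\mathrm{OPT}_{\mathrm{DS}}(T)$ is \emph{not} automatic, so the cleaner route is to observe $n \le c\cdot \mathrm{OPT}_{\mathrm{DS}}(T)$ fails in general, and therefore I would instead invoke the additive-constant structure directly to verify the $L$-reduction parameters.

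Concretely, for the $L$-reduction I would set the two constants $\beta$ and $\gamma$ as follows. The backward map takes a consistent subset $B$ of $T'$ and, via the $(\Leftarrow)$ direction of Lemma \ref{lemma1_mscs_apx}, returns $A = \{\,v : v' \in B\setminus V_2\,\}$, a dominating set of $T$ with $|A| = |B| - n$. This gives $|A| - \mathrm{OPT}_{\mathrm{DS}}(T) = |B| - n - (\mathrm{OPT}_{\mathrm{MCS}}(T') - n) = |B| - \mathrm{OPT}_{\mathrm{MCS}}(T')$, so the cost \emph{gap} is preserved additively with constant $\gamma = 1$. For the forward parameter I need $\mathrm{OPT}_{\mathrm{MCS}}(T') \le \beta\cdot \mathrm{OPT}_{\mathrm{DS}}(T)$, i.e. $n + \mathrm{OPT}_{\mathrm{DS}}(T) \le \beta\,\mathrm{OPT}_{\mathrm{DS}}(T)$. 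The main obstacle is exactly this step: one must control $n$ in terms of $\mathrm{OPT}_{\mathrm{DS}}(T)$, which is where the phrase ``taking a closer look at the \apx-hardness proof'' becomes essential. I would argue that the hard instances produced by the reduction of \cite{circle} satisfy $\mathrm{OPT}_{\mathrm{DS}}(T) = \Omega(n)$, so that $n \le c\,\mathrm{OPT}_{\mathrm{DS}}(T)$ for a fixed constant $c$, whence $\beta = c+1$ works.

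Finally I would assemble the pieces: with $\beta = c+1$ and $\gamma = 1$, the pair of maps is a valid $L$-reduction from \textsc{Minimum Dominating Set} on circle graphs to \mcs on circle graphs. Since $L$-reductions compose and preserve membership in \apx as well as the nonexistence of a \ptas, the \apx-hardness of the former transfers to the latter, completing the proof. The one genuinely delicate point, which I would state carefully rather than wave at, is the linear lower bound $\mathrm{OPT}_{\mathrm{DS}}(T) = \Omega(n)$ on the source instances; everything else is the bookkeeping of the affine cost relation $\mathrm{OPT}_{\mathrm{MCS}}(T') = n + \mathrm{OPT}_{\mathrm{DS}}(T)$ furnished by Lemma \ref{lemma1_mscs_apx}.
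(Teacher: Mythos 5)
Your proposal takes essentially the same route as the paper: a reduction from minimum dominating set on circle graphs (\apx-hard by \cite{circle}) through Lemma \ref{lemma1_mscs_apx}. You are in fact more careful than the paper's one-sentence proof, which never addresses the point you rightly single out as the only delicate step --- the additive $+n$ in $\mathrm{OPT}_{\mathrm{MCS}}(T') = n + \mathrm{OPT}_{\mathrm{DS}}(T)$ makes the reduction approximation-preserving only if the hard instances produced in \cite{circle} satisfy $\mathrm{OPT}_{\mathrm{DS}}(T) = \Omega(n)$.
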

\begin{proof}
    Using a similar gap-preserving reduction of the paper \cite{circle} and the lemma \ref{lemma1_mscs_apx}, we conclude that \mcs is \apx-hard in circle graphs.
\end{proof}

\section{Remarks}   
This study introduces the first approximation algorithm for \mcs, as there is yet no approximation algorithm to date. Therefore, it is also very important to show the approximation algorithms in various graph classes. As we prove \apx-hard result for \mcs in circle graphs, an approximation algorithm is also required. \fpt can be used as an open problem for interval and circle graphs in the future.

\bibliographystyle{unsrt} 
\bibliography{references}  
\end{document}